\newcommand{\comment}[1]{} %Auskommentieren von beliebig vielen Zeilen
\newcommand{\set}[2]{\{\,#1\mid#2\,\}} %Verwendung: \set{x,y}{x,y>0}
\newcommand{\co}{\mbox{\bf co}}
\newcommand{\AC}{{\rm \bf AC}}
\newcommand{\NC}{{\rm \bf NC}}
\renewcommand{\L}{{\rm \bf L}}
\newcommand{\FL}{{\rm \bf FL}}
\newcommand{\NL}{{\rm \bf NL}}
\newcommand{\SL}{{\rm \bf SL}}
\newcommand{\UL}{{\rm \bf UL}}
\newcommand{\FUL}{{\rm \bf FUL}}
\newcommand{\NP}{{\rm \bf NP}}
\newcommand{\Ord}{\mbox{{\sc Ord}}}
\newcommand{\code}{{\it code}}
\newcommand{\Grid}{{\it grid}}
\newcommand{\assign}{\leftarrow}
\begin{document}

%\title[short_title]{real title}
%
% and a list of author information of the form
%
% \author[ref1]{short_author_1}{Author 1}
% \address[ref1]{address 1}
% \email{author1@email1}
% \urladdr{http://www.institute1.sw}
% \thanks{thanks 1}
%
% \author[ref2]{short_author_2}{Author 2}
% \address[ref2]{address 2}
% \email{author2@email2}
% \urladdr{http://www.institute2.sw}
% \thanks{thanks 2}

%\begin{titlepage}
\title[The Isomorphism Problem for Planar 3-Connected Graphs is in UL]{The Isomorphism Problem for Planar 3-Connected Graphs is in Unambiguous Logspace}

\author[ref1]{T. Thierauf}{Thomas Thierauf}
\address[ref1]{Fak.\ Elektronik und Informatik, HTW Aalen, 73430 Aalen, Germany}
\email{thomas.thierauf@uni-ulm.de}

\author[ref2]{F. Wagner}{Fabian Wagner}
\address[ref2]{Inst.\ f\"{u}r Theoretische Informatik, Universit\"{a}t Ulm, 89069 Ulm, Germany}
\email{fabian.wagner@uni-ulm.de}

\thanks{Supported by DFG grants Scho 302/7-2 and TO 200/2-1.}
%\end{titlepage}

\begin{abstract}
The isomorphism problem for planar graphs is known to be efficiently solvable.
For planar 3-connected graphs, 
the isomorphism problem can be solved by efficient parallel algorithms,
it is in the class~$\AC^1$.

In this paper 
we improve the upper bound for planar 3-connected graphs to unambiguous logspace,
in fact to $\UL \cap \co\UL$.
As a consequence of our method we get that the isomorphism problem for oriented graphs is in $\NL$.
We also show that the problems are hard for $\L$.
\end{abstract}

\maketitle

\stacsheading{2008}{633-644}{Bordeaux}
\firstpageno{633}

%######################################################################################
\vskip-0.3cm
\section{Introduction}

The graph isomorphism problem (GI) is one of the most challenging problems today.
No polynomial time algorithm is known for it,
even with extended resources like randomization or on quantum computers.
On the other hand,
it is not known to be $\NP$-complete
and there are good reasons to conjecture that it is in fact not complete.

For some restricted classes of graphs, efficient algorithms for GI are known.
For example for trees~\cite{AhoHopUll74} 
or for graphs with bounded degree~\cite{Luks82}.
We are interested in planar graphs and 3-connected graphs.
A graph is 3-connected if it remains connected after deleting two arbitrary vertices.
In 1966, Weinberg~\cite{Weinberg66} presented an $O(n^2)$-algorithm 
for testing isomorphism of planar 3-connected graphs. 
This algorithm was improved and extended by Hopcroft and Tarjan~\cite{HopTar74} 
to an $O(n \log n)$-algorithm
for the planar graph isomorphism problem (planar-GI).
Then Hopcroft and Wong~\cite{HopWon74} showed that it is solvable in linear time.
Since the constant hidden in the linear time bound is very large,
the problem has been reconsidered under a more practical approach~\cite{KukHolCoo04}.
The parallel complexity of planar-GI has been studied by 
Miller and Reif~\cite{MilRei91} and Ramachandran and Reif~\cite{RamRei94}.
They showed that planar-GI 
$\AC^1$-reduces to the 3-connected case and that 3-connected GI is in $\AC^1$.

Grohe and Verbitsky~\cite{GroVer06} gave an alternative way to show  that  planar-GI is in $\AC^1$.
They proved for a class~$\mathcal{G}$ of graphs, that if every graph in~$\mathcal{G}$
is definable in a finite-variable first order logic within 
logarithmic quantifier depth, then the isomorphism problem  for~$\mathcal{G}$  is  in $\AC^1$.
Later Verbitsky~\cite{Verbitsky07} showed that planar 3-connected graphs are definable 
with 15 variables and quantifier depth $O(\log n)$ which leads to a 14-dimensional Weisfeiler-Lehman algorithm. With the reduction of~\cite{MilRei91} one obtains a new $\AC^1$-algorithm for planar-GI.

In the above papers on planar-GI,
the authors consider first 3-connected graphs.
The reason is a result due to Whitney~\cite{Whitney33}
that every planar 3-connected graph has precisely two embeddings on a sphere,
where one embedding is the mirror image of the other.
Moreover,
one can efficiently compute these embeddings.
Weinberg~\cite{Weinberg66} used these embeddings to compute a code for a graph,
such that isomorphic graphs will have the same code.
We call a code with this property a {\em canonical code\/}  for the graph.

Some of the subroutines in the above algorithms have complexity below $\AC^1$.
Allender and Mahajan~\cite{AllMah00}  showed that planarity testing 
is hard for $\L$ and in symmetric logspace, $\SL$.
Since $\SL = \L$~\cite{Reingold05}, planarity testing is complete for logspace.
Furthermore Allender and Mahajan~\cite{AllMah00} showed that a planar embedding can be computed in logspace.
Also the connectivity structure of a (undirected) graph can be computed in logspace~\cite{NisTas95}.
Hence a natural question is whether planar-GI is in logspace.

While this question remains open,
we considerably improve the upper bound for planar-GI for 3-connected graphs in Section~\ref{s:3planar-GI},
namely from $\AC^1$ to unambiguous logspace,  in fact to $\UL \cap \co\UL$.
Like Weinberg, we construct codes for the given graphs.
In order to use only logarithmic space,
our code is constructed via a spanning tree,
which depends on the planar embedding of the graph.
A crucial tool in
the construction of the spanning tree is based on a recent result by 
Bourke, Tewari, and Vinodchandran~\cite{BouTewVin07}
that the reachability problem for planar directed graphs is in $\UL \cap \co\UL$.
They built on work of Reinhard and Allender~\cite{ReiAll00} and Allender, Datta, and Roy~\cite{AllDatRoy05}.
We argue in Section~\ref{s:distances} that their algorithm can be modified to not just solve reachability questions
but to compute distances between nodes in $\UL \cap \co\UL$.

The embedding of a planar graph can be represented as a {\em rotation scheme\/}.
Intuitively this gives the edges in clockwise or counter clockwise order around each node
such that it leads to a planar drawing of the graph.
Rotation schemes have also been considered for non-planar graphs.
We talk of {\em oriented graphs\/} in this case.
We extend our results to the isomorphism problem for oriented graphs.
There one has given two graphs~$G$ and~$H$ and a rotation scheme for each of the graphs.
One has to decide whether there is an isomorphism between~$G$ and~$H$ that respects the rotation schemes.
In Section~\ref{S:OGI} we show that the problem is in $\NL$.

With respect to lower bounds,
GI is known to be hard for DET~\cite{Toran04},
where DET is the class of problems that are $\NC^1$-reducible to the determinant
defined by~\cite{Cook85}.
In fact,
already the isomorphism problem for tournament graphs is hard for DET~\cite{Wagner07}.
We show in Section~\ref{s:hardness}
that the isomorphism problem for planar 3-connected graphs is hard for logspace.

%######################################################################################
\vskip-0.3cm
\section{Preliminaries}

%       By $\L$ we denote the languages computable by a logspace bounded Turing machine.
Basically, $\L$ and $\NL$ are the classes of languages computable by a deterministic %***
and nondeterministic logspace bounded Turing machine, respectively. %***
%       For nondeterministic Turing machines, the class is called $\NL$.
A nondeterministic Turing machine is called {\em unambiguous\/},
if it has at most one accepting computation on any input.
The class of languages computable by
unambiguous logspace bounded Turing machines is denoted by $\UL$.
$\NL$ is known to be closed under complement~\cite{Immerman88,Szelepcsenyi88}, %*** known to be
but it is open for $\UL$.

The functional version of~$\L$ is denoted by~$\FL$.
It is known that $\FL$-functions are closed under composition, i.e.\ $\FL \circ \FL = \FL$.
The proof goes by recomputing bits of the function value of the first function
each time such a bit is needed by the second function.
The same argument works when we consider functions that are computed by unambiguous
logspace bounded Turing machines.
If we  call the class $\FUL$, then this says that $\FUL \circ \FUL = \FUL$.
We need a further property of $\UL$:
\begin{lemma}\label{le:log-UL}
$\L^{\UL \cap \co\UL} = \UL \cap \co\UL$.
\end{lemma}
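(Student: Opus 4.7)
The inclusion $\UL \cap \co\UL \subseteq \L^{\UL \cap \co\UL}$ is immediate. For the other direction, I would first observe that $\L^{\UL \cap \co\UL}$ is closed under complement, since swapping the accept and reject states of a deterministic log-space oracle machine still yields a deterministic log-space oracle machine with the same oracle. Hence it suffices to prove $\L^{\UL \cap \co\UL} \subseteq \UL$.

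Fix some $L$ decided by a deterministic log-space oracle machine $M$ querying an oracle $A \in \UL \cap \co\UL$, and let $M_A, M_{\bar A}$ be unambiguous log-space machines for $A$ and its complement. The plan is to build a $\UL$ machine $N$ that simulates $M$ step by step on input $x$. The principal obstacle is that $M$ may write polynomially long strings on its write-only oracle tape, so $N$ cannot afford to store a query explicitly. I would handle this with a checkpoint technique: $N$ keeps on its work tape only the configuration $D_k$ of $M$ immediately after the $k$-th oracle answer (the initial configuration for $k=0$). Since the oracle tape is erased after each query, $D_k$ fits in log space, and since $M$'s evolution between queries is deterministic, the forthcoming query $q_{k+1}$ is a fixed function of $D_k$, any individual bit of which can be recovered on demand in log space by replaying $M$ from $D_k$.

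When $M$'s simulation reaches the query state, $N$ nondeterministically guesses a bit $b \in \{0,1\}$ for the oracle answer; if $b=1$ it invokes $M_A$ on $q_{k+1}$, otherwise $M_{\bar A}$ on $q_{k+1}$. Whenever the invoked verifier asks for the $i$-th bit of its input, $N$ replays $M$ from $D_k$ and reports the $i$-th symbol written to the oracle tape. If the verifier rejects, $N$ rejects; if it accepts, $N$ uses $b$ to advance $M$ deterministically from $D_k$ to the next checkpoint $D_{k+1}$ and continues. Finally $N$ accepts iff $M$ accepts.

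The main thing to verify is then unambiguity. On any input there is a unique correct sequence of oracle answers; any deviating guess is discarded because the ``wrong'' verifier has no accepting computation, while the correct guess admits exactly one accepting path inside the invoked $\UL$ verifier. As the verifier invocations are sequential, their accepting path counts multiply to $1$, yielding a single accepting computation of $N$ precisely when $M^A$ accepts. The log-space recomputation of query bits from a checkpoint is essentially the standard $\FL \circ \FL = \FL$ replay argument, and composing it with unambiguous verification introduces no further nondeterminism beyond the single path already accounted for. This gives $L \in \UL$, and together with closure under complement the claimed equality follows.
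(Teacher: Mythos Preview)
Your proof is correct and follows essentially the same approach as the paper's: simulate the deterministic oracle machine, at each query guess the answer bit $b$, verify the guess by running the appropriate unambiguous verifier ($M_A$ if $b=1$, $M_{\bar A}$ if $b=0$), reject if verification fails, and argue unambiguity from the fact that exactly one guess is correct and the corresponding verifier has a unique accepting path. The paper's write-up is terser and does not spell out how the (possibly polynomially long) oracle query is fed to the verifier in log space; it instead relies on the $\FL\circ\FL=\FL$ recomputation remark made just before the lemma, which is exactly the checkpoint/replay idea you make explicit. Your preliminary reduction to $\L^{\UL\cap\co\UL}\subseteq\UL$ via closure under complement is a cosmetic reorganization of the same argument.
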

\begin{proof}
Let $M$ be a logspace oracle Turing machine with oracle~$A \in \UL \cap \co\UL$.
Let $M_0,M_1$ be (nondeterministic) unambiguous logspace Turing machines
such that $L(M_0) = \overline{A}$ and $L(M_1) = A$.
An unambiguous logspace Turing machine~$M'$ for $L(M,A)$ works as follows
on input~$x$:
\begin{quote}
Simulate~$M$ on input~$x$.
If~$M$ asks an oracle question~$y$,
then nondeterministically guess whether the answer is~$0$ or~$1$.
\begin{itemize}
\item
If the guess is answer~$0$,
then simulate $M_0$ on input~$y$.
If $M_0$ accepts, then continue the simulation of~$M$ with oracle answer~$0$.
If $M_0$ rejects then reject and halt.
\item
If the guess is answer~$1$,
then simulate $M_1$ on input~$y$.
If $M_1$ accepts, then continue the simulation of~$M$ with oracle answer~$1$.
If $M_1$ rejects then reject and halt.
\end{itemize}
Finally accept iff $M$ accepts.
\end{quote}
Note that~$M'$ is unambiguous because~$M_0$ and~$M_1$ are unambiguous
and of the two guessed oracle answers always exactly one guess is correct.
\end{proof}

Let~$G=(V,E)$ be an undirected graph with vertices~$V=V(G)$ 
and edges~$E=E(G)$. Let~$G - \{v\}$ denote the induced subgraph of~$G$ on~$V(G) \setminus \{v\}$.
The \emph{neighbours} of~$v \in V$ are~$\Gamma(v) = \set{u}{(v,u) \in E}$.
By $E_v$ we denote the edges going from~$v$ to its neighbors,
$E_v = \set{(v,u)}{u \in \Gamma(v)}$.
By $d(u,v)$ we denote the distance between nodes~$u$ and~$v$ in~$G$,
which is the length of a shortest path from~$u$ to~$v$ in~$G$.

A graph is \emph{connected} if there is a path between any two vertices in $G$.
A vertex $v \in V$ is an \emph{articulation point} if $G - \{v\}$ is  not connected.
A pair of vertices $u,v \in V$ is a \emph{separation pair} if $G - \{u,v\}$ is not connected.
A \emph{biconnected graph} contains no articulation points.
A 3-\emph{connected graph} contains no separation pairs.

A \emph{rotation scheme} for a graph~$G$ is a set~$\rho$ of permutations,
$\rho = \set{\rho_v}{v\in V}$,
where~$\rho_v$ is a permutation on~$E_v$ that has only one cycle
(which is called a rotation).
Let $\rho^{-1}$  be the set of inverse rotations,
$\rho^{-1} = \set{\rho_v^{-1}}{v\in V}$.
%       This scheme encodes any map with graph $G$ embedded on an orientable surface \cite{Ed60}.
%        --- rotation scheme: see Definition D30 page 707, [GrYe04]
A rotation scheme~$\rho$ describes an embedding of graph~$G$ in the plane.
We call~$G$ together with~$\rho$ an {\em oriented graph\/}.
If the embedding is planar, we call~$\rho$ a {\em planar rotation scheme\/}.
Note that in this case~$\rho^{-1}$ is a planar rotation scheme as well.
Allender and Mahajan~\cite{AllMah00} showed that a planar rotation scheme for a planar graph
can be computed in logspace.

If a planar graph is in addition 3-connected,
then there exist precisely two planar rotation schemes~\cite{Whitney33},
namely some planar rotation scheme~$\rho$ and its inverse $\rho^{-1}$.
This is a crucial property in our isomorphism test.

%######################################################################################
\vskip-0.3cm
\section{Planar 3-Connected Graph Isomorphism}\label{s:3planar-GI}

In this section we prove the following theorem.

\begin{theorem}\label{th:iso-planar3}
The isomorphism problem for planar, $3$-connected graphs is in $\UL \cap \co\UL$.
\end{theorem}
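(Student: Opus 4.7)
The plan is to assign to every planar 3-connected graph~$G$ a canonical string $\code(G)$ such that $G\cong H$ iff $\code(G)=\code(H)$, and such that $\code(G)$ can be produced by an $\FUL$ transducer; the isomorphism test then reduces to string equality, and Lemma~\ref{le:log-UL} delivers the theorem.

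The code will be built from a BFS spanning tree whose shape is completely forced by the embedding. First, use Allender--Mahajan to compute a planar rotation scheme~$\rho$ for~$G$ in logspace; Whitney's theorem guarantees that the only other planar rotation scheme is~$\rho^{-1}$. Fix an \emph{anchor} $(u_0,v_0,\sigma)$, where $(u_0,v_0)\in E$ is an oriented edge and $\sigma\in\{\rho,\rho^{-1}\}$. Starting from~$u_0$, carry out BFS with respect to~$\sigma$: the children of a processed vertex~$w$ are listed in the order in which $\sigma_w$ meets them when scanning $E_w$ starting from $w$'s \emph{parent edge}, where the parent edge of $u_0$ is the anchor edge $(u_0,v_0)$ and the parent edge of any other~$w$ is the edge back to the vertex that first introduced~$w$. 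A neighbour~$u$ of~$w$ is a child exactly when $d(u_0,u)=d(u_0,w)+1$ and $u$ has not yet been introduced earlier in the traversal; both conditions are decidable from the distance function, which is in $\UL\cap\co\UL$ by Section~\ref{s:distances}. The code records, for each vertex~$v$ in BFS order, the cyclic sequence $\sigma_v$ starting at its parent edge, tagging each edge either as a tree edge to a child or as a back/cross edge together with the BFS index of the other endpoint. Finally $\code(G)$ is the lexicographic minimum of the codes obtained over all $O(|E|)$ anchors.

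Correctness of the isomorphism test follows from Whitney's rigidity theorem: an isomorphism between planar 3-connected graphs carries the pair $\{\rho,\rho^{-1}\}$ on one side to that on the other, so some anchor of~$G$ is mapped to some anchor of~$H$ under which the two BFS constructions become literally identical. Thus $G\cong H$ iff the minimum strings coincide, and correctness of~$\code$ as an invariant follows.

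For complexity, every primitive used to output a single candidate code lies in~$\UL\cap\co\UL$: distances by Section~\ref{s:distances}; the predicate ``$u$ has not yet been introduced'' by re-deriving, on demand, the prefix of the canonical BFS traversal from the anchor, the rotation, and distance queries; and the BFS index of a vertex by counting such predecessors. Hence building one candidate code is an $\FUL$ computation by Lemma~\ref{le:log-UL} together with the closure $\FUL\circ\FUL=\FUL$, and the minimisation over $O(|E|)$ anchors is a further logspace step. The main obstacle is precisely this re-derivation: unlike an ordinary BFS, we cannot afford to store the traversal, so we must express every bit of every candidate code as an unambiguous logspace function of the input and the anchor. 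The key observation that lets us do this is that once distances are available, the BFS order is a deterministic function of $(G,\sigma,u_0,v_0)$ whose value at each position need only be recomputed locally, so no intermediate data structure is required and unambiguity is preserved throughout.
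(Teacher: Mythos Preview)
Your overall strategy coincides with the paper's: build a canonical code from a rotation-guided BFS spanning tree rooted at a designated edge, with distances (Section~\ref{s:distances}) as the only non-logspace primitive, and then range over all anchors. The coding details differ (you tag edges during the BFS itself; the paper separately walks the tree to list all edges and then renames), but this is inessential.

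There is, however, a genuine gap in the complexity argument. You assert that the predicate ``$u$ has not yet been introduced'' and the BFS index of a vertex can be obtained by ``re-deriving, on demand, the prefix of the canonical BFS traversal'', and that the BFS order ``need only be recomputed locally''. But re-deriving a BFS prefix requires, at every step of that prefix, knowing which vertices were introduced at still earlier steps --- precisely the information you are trying to compute. Unwound, this is a recursion whose depth equals the BFS level; each frame must hold a current vertex (or pair of vertices being compared) plus a running minimum over neighbours, so a direct implementation uses $\Theta(n\log n)$ space. Nothing in your proposal breaks this dependency chain.

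The paper avoids the problem by never computing the BFS order at all. For each vertex~$w$ it traces a single shortest path from~$s$ to~$w$: at each step it takes the $\rho$-first incident edge whose other endpoint is strictly closer to~$w$, and declares the penultimate vertex to be the tree parent of~$w$. This path-tracing is manifestly logspace given a distance oracle, since one only stores the current active edge and a step counter. (In fact the tree obtained this way \emph{is} your BFS tree --- both pick out the lexicographically first shortest $s$--$w$ path under the $\rho$-order --- but establishing that is exactly the work your proposal skips.) Once you replace ``re-derive the prefix'' by this path-tracing primitive, the rest of your plan goes through; the paper then finishes with a simple DFS-style traversal of the tree (Step~2) and a renaming pass (Step~3), both in~$\L$.
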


In 1966, Weinberg~\cite{Weinberg66} presented an $O(n^2)$ algorithm for 
testing isomorphism of planar 3-connected graphs.
The algorithm computes a {\em canonical form\/} for each of the two graphs.
This is a coding of graphs
such that these codings are equal iff the two graphs are isomorphic.
For a 3-connected graph~$G$, 
the algorithm starts by constructing a code for every edge of~$G$ and 
any of the two rotation schemes.
Of all these codes, the lexicographical smallest one is the code for~$G$.

For a designated edge $(s,t)$ and a rotation scheme~$\rho$ for~$G$,
the code is constructed roughly as follows.
Every undirected edge is considered as two directed edges.
Now one can define an Euler tour based on some rules for selecting the next edge.
Basically, the rules distinguish between the case whether a vertex or  edge was already visited or not.
The next edge to consider is chosen to the left or right of the active edge according to~$\rho$.
Define edge $(s,t)$ to be the start of the tour.
The code consists of the nodes as they appear on the tour,
where the names are replaced by the order of their first appearance on the tour.
That is,
the code starts with $(1, 2)$ for the edge $(s,t)$
and every later occurrence of~$s$ or~$t$ on the tour is replaced by~1 or~2, respectively.

Weinberg's algorithm doesn't work in logspace, 
because one has to store the vertices and edges already visited.
We show how to construct a different code in $\UL$.
Let $(s,t)$ be a designated edge and~$\rho$ be a rotation scheme for~$G$.
Our construction makes three steps.

\begin{enumerate}
\item 
First 
we compute a canonical spanning tree~$T$ for~$G$.
This is a spanning tree which depends on $(s,t)$, $\rho$, and~$G$,
but not on the way these inputs are represented.
\item
Next 
we construct a canonical list~$L$ of all edges of~$G$.
To do so, 
we traverse~$T$ and enumerate the edges of~$T$ and their neighbor edges according to~$\rho$.
The list~$L$ does not depend on the representation of~$G$, $\rho$ or~$T$.
\item
Finally
we rename the vertices depending on the position of their first occurrence in the list~$L$
and get a code word for~$G$ with respect to $(s,t)$ and~$\rho$.
\end{enumerate}

We will see that the spanning tree in step~1 can be computed in
(the functional version of) $\UL \cap \co\UL$.
The list and the renaming in step~2 and step~3 can be computed in logspace, $\L$.
Therefore
the composition of the three steps is in $\UL \cap \co\UL$.

The overall algorithm has to decide
whether two given graphs~$G$ and~$H$ are isomorphic.
To do so
we fix $(s,t)$ and~$\rho$ for $G$
and cycle through all edges of~$H$ as designated edge and the two possible
permutation schemes of~$H$.
Then~$G$ and~$H$ are isomorphic iff we find a code for~$H$ that matches the code for~$G$.
It is not hard to see that this outer loop is in logspace.
Therefore the isomorphism test stays in $\UL \cap \co\UL$.

%###########################################
\subsection*{Step 1: Construction of a canonical spanning tree}

We show that the following problem can be solved in unambiguous logspace.
\begin{itemize}
\item
\emph{Input:}
An undirected graph $G = (V,E)$, a rotation scheme~$\rho$ for~$G$,
and a designated edge $(s,t) \in E$.
\item
\emph{Output:} A canonical spanning tree~$T \subseteq E$ of~$G$.
\end{itemize}

Recall that by a canonical spanning tree we mean that~$T$ 
does not depend on the input representation of~$\rho$ or~$G$,
any representation will result in the same spanning tree~$T$.

The idea to construct the spanning tree is to traverse~$G$ with a breath-first search
starting at node~$s$.
The neighbors of a node are visited in the order given by the rotation scheme~$\rho$.
Since the algorithm should work in logspace, 
we cannot afford to store all the nodes that we already visited, as in a standard breath-first search.
We get around this problem by working with distances between nodes.

We start with the nodes at distance~1 from~$s$.
That is, write $(s,v)$ on the output tape, for all $v \in \Gamma(s)$.
Now let $d \geq 2$ and assume that we have already constructed~$T$ up to nodes
at distance $\leq d-1$ to~$s$.
Then we  consider the nodes at distance~$d$ from~$s$.
Let~$w$ be a node with $d(s,w) = d$.
We have to connect~$w$ to the tree constructed so far.
We do so by computing a shortest path from~$s$ to $w$.
Ambiguities are resolved by using the first feasible edge according to~$\rho$.
We start with $(s,t)$ as the active edge $(u,v)$.
\begin{itemize}
\item
If $d(u,w) > d(v,w)$, 
then $(u,v)$ is the first  edge encountered that is on a shortest path from~$u$ to~$w$.
Therefore we go from~$u$ to~$v$ and start searching the next edge from~$v$.
As starting edge we take the successor of $(v,u)$.
That is, $\rho_v(v,u)$ is the new active edge.
\item
If $d(u,w) \leq d(v,w)$, 
then $(u,v)$ is not on a shortest path from~$u$ to~$w$.
Then we proceed with $\rho_u(u,v)$ as the new active edge.
\end{itemize}
After $d-1$ steps in direction of~$w$
the node~$v$ of the active edge $(u,v)$ is a predecessor of~$w$ on a shortest path from~$s$ to~$w$.
Then we write $(v,w)$ on the output tape.
The following pseudo-code summarizes the algorithm.

\begin{tabbing}
xxx\=xxx\=xxx\=xxx\=xxx\=xxx\= \kill
\> {\bf for all} $v \in \Gamma(s)$ {\bf do} ~~{\bf output} $(s,v)$\\
\> {\bf for } $d \assign 2$ {\bf to} $n-1$ {\bf do}\\
\> \>  {\bf for all} $w \in V$ such that $d(s,w) = d$ {\bf do}\\
\>\> \> $(u,v) \assign (s,t)$\\
\>\> \> {\bf for } $k \assign 1$ {\bf to} $d-1$ {\bf do}\\
\> \> \> \> {\bf while}    $d(u,w) \leq d(v,w)$ {\bf do} $(u,v) \assign \rho_u(u,v)$ \\
\> \> \> \>  $(u,v) \assign \rho_v(v,u)$\\
\> \> \> {\bf output} $(v,w)$
\end{tabbing}

The spanning tree~$T$ is canonical because its construction depends only  on~$\rho$,
edge $(s,t)$, and edge set~$E$.
The following figure shows an example of a spanning tree~$T$ for a graph~$G$ 
with rotation function $\rho$
which arranges the edges in clockwise order around each vertex.

\vspace{-2ex}

\begin{center}
\begin{figure}[htbp]
\scalebox{0.95}{ 
\setlength{\unitlength}{0.00083333in}
\begingroup\makeatletter\ifx\SetFigFont\undefined
% extract first six characters in \fmtname
\def\x#1#2#3#4#5#6#7\relax{\def\x{#1#2#3#4#5#6}}%
\expandafter\x\fmtname xxxxxx\relax \def\y{splain}%
\ifx\x\y   % LaTeX or SliTeX?
\gdef\SetFigFont#1#2#3{%
  \ifnum #1<17\tiny\else \ifnum #1<20\small\else
  \ifnum #1<24\normalsize\else \ifnum #1<29\large\else
  \ifnum #1<34\Large\else \ifnum #1<41\LARGE\else
     \huge\fi\fi\fi\fi\fi\fi
  \csname #3\endcsname}%
\else
\gdef\SetFigFont#1#2#3{\begingroup
  \count@#1\relax \ifnum 25<\count@\count@25\fi
  \def\x{\endgroup\@setsize\SetFigFont{#2pt}}%
  \expandafter\x
    \csname \romannumeral\the\count@ pt\expandafter\endcsname
    \csname @\romannumeral\the\count@ pt\endcsname
  \csname #3\endcsname}%
\fi
\fi\endgroup
{\renewcommand{\dashlinestretch}{30}
\begin{picture}(6346,2095)(0,-10)
\put(2400,424){\makebox(0,0)[lb]{{\SetFigFont{12}{14.4}{rm}$\rho_{v_3}$}}}
\put(2400,649){\makebox(0,0)[lb]{{\SetFigFont{12}{14.4}{rm}$\rho_{v_2}$}}}
\put(2400,874){\makebox(0,0)[lb]{{\SetFigFont{12}{14.4}{rm}$\rho_{v_1}$}}}
\put(2400,1099){\makebox(0,0)[lb]{{\SetFigFont{12}{14.4}{rm}$\rho_t$}}}
\put(2700,1324){\makebox(0,0)[lb]{{\SetFigFont{12}{14.4}{rm}$= (~(s,t)~(s,v_1)~(s,v_2)~)$}}}
\put(2700,1099){\makebox(0,0)[lb]{{\SetFigFont{12}{14.4}{rm}$= (~(t,s)~(t,v_3)~(t,v_1)~)$}}}
\put(2700,874){\makebox(0,0)[lb]{{\SetFigFont{12}{14.4}{rm}$= (~(v_1,s)~(v_1,t)~(v_1,v_3)~(v_1,v_2)~)$}}}
\put(2700,649){\makebox(0,0)[lb]{{\SetFigFont{12}{14.4}{rm}$= (~(v_2,s)~(v_2,v_1)~(v_2,v_3)~)$}}}
\put(2700,424){\makebox(0,0)[lb]{{\SetFigFont{12}{14.4}{rm}$= (~(v_3,t)~(v_3,v_2)~(v_3,v_1)~)$}}}
\put(2400,1324){\makebox(0,0)[lb]{{\SetFigFont{12}{14.4}{rm}$\rho_s$}}}
\put(2400,1624){\makebox(0,0)[lb]{{\SetFigFont{12}{14.4}{rm}$\rho$}}}
\put(2700,1624){\makebox(0,0)[lb]{{\SetFigFont{12}{14.4}{rm}$= \{ \rho_s, \rho_t, \rho_{v_1}, \rho_{v_2}, \rho_{v_3} \}$}}}
\put(1050,124){\blacken\ellipse{72}{72}}
\put(1050,124){\ellipse{72}{72}}
\put(1950,1024){\blacken\ellipse{72}{72}}
\put(1950,1024){\ellipse{72}{72}}
\put(1050,1024){\blacken\ellipse{72}{72}}
\put(1050,1024){\ellipse{72}{72}}
\put(150,1024){\blacken\ellipse{72}{72}}
\put(150,1024){\ellipse{72}{72}}
\put(600,1924){\blacken\ellipse{72}{72}}
\put(600,1924){\ellipse{72}{72}}
\thicklines
\path(1050,124)(1050,1024)
\path(1050,124)(1950,1024)
\path(1050,124)(150,1024)
\thinlines
\path(600,1924)(1050,1024)
\path(600,1924)(1950,1024)
\thicklines
\path(600,1924)(150,1024)
\thinlines
\path(150,1024)(1050,1024)
\path(1050,1024)(1950,1024)
\put(1125,874){\makebox(0,0)[lb]{{\SetFigFont{12}{14.4}{rm}$v_1$}}}
\put(675,1924){\makebox(0,0)[lb]{{\SetFigFont{12}{14.4}{rm}$v_3$}}}
\put(0,874){\makebox(0,0)[lb]{{\SetFigFont{12}{14.4}{rm}$t$}}}
\put(825,49){\makebox(0,0)[lb]{{\SetFigFont{12}{14.4}{rm}$s$}}}
\put(2025,874){\makebox(0,0)[lb]{{\SetFigFont{12}{14.4}{rm}$v_2$}}}
\end{picture}
}
}
\label{F:Pl3ConRotFkt}
\end{figure}
\end{center}

%\vspace{-4ex}

Except for the computation of the distances,
the algorithm works in logspace.
We have to store the values of~$d$, $k$, $u$ and~$v$, and the position of~$w$,
plus some extra space for doing calculations.
We show in Theorem~\ref{th:dist} below that the distances can be computed in $\UL \cap \co\UL$.
By Lemma~\ref{le:log-UL}
the canonical spanning tree can be computed in $\UL \cap \co\UL$.

%###########################################
\subsection*{Step 2: Computation of a canonical list of all edges}

We show that the following problem can be solved in logspace.
\begin{itemize}
\item
\emph{Input:}
An undirected graph $G = (V,E)$, a rotation scheme~$\rho$ for~$G$,
a spanning tree~$T \subseteq E$ of~$G$, and a designated edge $(s,t) \in T$.
\item
\emph{Output:} A canonical list~$L$  of all edges in~$E$.
\end{itemize}

Recall that by a canonical list we mean that
the order of the edges as they appear in $L$
does not depend on the input representation of~$\rho$, $G$ or~$T$,
any representation will result in the same list.

The idea is to traverse the spanning tree in a depth-first manner. 
At each vertex visit all incident edges in breath-first manner according to $\rho$
until the next edge contained in the spanning tree is reached.

We start the traversal with edge $(s,t)$ as the active edge $(u,v)$.
We write $(u,v)$ on the output tape and then compute the next active edge as follows:
\begin{itemize}
\item
If $(u,v) \in T$ then we walk depth-first in~$T$ from~$u$ to~$v$,
consider the edge $(v,u)$ and take its successor according to~$\rho_v$,
i.e., $\rho_v(v,u)$ is the new active edge.
\item
If $(u,v) \not\in T$ then we proceed breath-first with $\rho_u(u,v)$
as the new active edge.
\end{itemize}
This step is repeated until the active edge is again $(s,t)$.
Then we have traversed all edges in~$E$.
Every undirected edge is encountered exactly twice, once in each direction.
The following pseudo-code summarizes the algorithm.

\begin{tabbing}
xxx\=xxx\=xxx\=xxx\=xxx\=xxx\= \kill
\> $(u,v) \assign (s,t)$\\
\> {\bf repeat}\\
\> \>  {\bf output} $(u,v)$\\
\> \>  {\bf if} $(u,v) \in T$ {\bf then} $(u,v) \assign \rho_v(v,u)$\\
\> \>  {\bf else} $(u,v) \assign \rho_u(u,v)$\\
\> {\bf until} $(u,v) =  (s,t)$
\end{tabbing}

Clearly, the algorithm works in logspace.
The list~$L$ is canonical because its construction depends only  on~$\rho$,
edge $(s,t)$, and sets~$E$ and~$T$.
Since~$T$ is canonical as well,
$L$ depends actually only on~$\rho$, $(s,t)$, and~$E$.
%Figure~\ref{F:CanonEdgeList} 
The following figure 
shows an example for $L$.

%#####
%#   #
%#   #
%#####
\begin{center}
%\begin{figure}%[htbp]
\scalebox{1}{ 
\setlength{\unitlength}{0.00083333in}
\begingroup\makeatletter\ifx\SetFigFont\undefined
% extract first six characters in \fmtname
\def\x#1#2#3#4#5#6#7\relax{\def\x{#1#2#3#4#5#6}}%
\expandafter\x\fmtname xxxxxx\relax \def\y{splain}%
\ifx\x\y   % LaTeX or SliTeX?
\gdef\SetFigFont#1#2#3{%
  \ifnum #1<17\tiny\else \ifnum #1<20\small\else
  \ifnum #1<24\normalsize\else \ifnum #1<29\large\else
  \ifnum #1<34\Large\else \ifnum #1<41\LARGE\else
     \huge\fi\fi\fi\fi\fi\fi
  \csname #3\endcsname}%
\else
\gdef\SetFigFont#1#2#3{\begingroup
  \count@#1\relax \ifnum 25<\count@\count@25\fi
  \def\x{\endgroup\@setsize\SetFigFont{#2pt}}%
  \expandafter\x
    \csname \romannumeral\the\count@ pt\expandafter\endcsname
    \csname @\romannumeral\the\count@ pt\endcsname
  \csname #3\endcsname}%
\fi
\fi\endgroup
{\renewcommand{\dashlinestretch}{30}
\begin{picture}(6265,2095)(0,-10)
\put(1050,124){\blacken\ellipse{72}{72}}
\put(1050,124){\ellipse{72}{72}}
\put(1950,1024){\blacken\ellipse{72}{72}}
\put(1950,1024){\ellipse{72}{72}}
\put(1050,1024){\blacken\ellipse{72}{72}}
\put(1050,1024){\ellipse{72}{72}}
\put(150,1024){\blacken\ellipse{72}{72}}
\put(150,1024){\ellipse{72}{72}}
\put(600,1924){\blacken\ellipse{72}{72}}
\put(600,1924){\ellipse{72}{72}}
\thicklines
\path(1050,124)(1050,1024)
\path(1050,124)(1950,1024)
\path(1050,124)(150,1024)
\thinlines
\path(600,1924)(1050,1024)
\path(600,1924)(1950,1024)
\thicklines
\path(600,1924)(150,1024)
\thinlines
\path(150,1024)(1050,1024)
\path(1050,1024)(1950,1024)
\put(1125,874){\makebox(0,0)[lb]{{\SetFigFont{12}{14.4}{rm}$v_1$}}}
\put(675,1924){\makebox(0,0)[lb]{{\SetFigFont{12}{14.4}{rm}$v_3$}}}
\put(0,874){\makebox(0,0)[lb]{{\SetFigFont{12}{14.4}{rm}$t$}}}
\put(825,49){\makebox(0,0)[lb]{{\SetFigFont{12}{14.4}{rm}$s$}}}
\put(2250,1174){\makebox(0,0)[lb]{{\SetFigFont{12}{14.4}{rm}$(s,v_2)(v_2,v_1)(v_2,v_3)(v_2,s)$}}}
\put(2250,1474){\makebox(0,0)[lb]{{\SetFigFont{12}{14.4}{rm}$(s,v_1)(v_1,t)(v_1,v_3)(v_1,v_2)(v_1,s)$}}}
\put(2250,1774){\makebox(0,0)[lb]{{\SetFigFont{12}{14.4}{rm}$(s,t)(t,v_3)(v_3,v_2)(v_3,v_1)(v_3,t)(t,v_1)(t,s)$}}}
\put(1725,1774){\makebox(0,0)[lb]{{\SetFigFont{12}{14.4}{rm}$L~~=$}}}
\put(2025,874){\makebox(0,0)[lb]{{\SetFigFont{12}{14.4}{rm}$v_2$}}}
\end{picture}
}}
%\caption{Computation of List $L$ for $G$.}
%\label{F:CanonEdgeList}
%\end{figure}
\end{center}

%###########################################
\subsection*{Step 3: Renaming the vertices}

The last step is to rename the vertices in  the list~$L$ such that they become independent of the
names they have in~$G$.
This is achieved as follows:
consider the first occurrence (from left) of node~$v$ in~$L$.
Let $k-1$ be the number of pairwise different nodes to the left of~$v$.
Then all occurrences of~$v$ are replaced by $k$.
Recall that $L$ starts with the edge $(s,t)$.
Hence all occurrences of~$s$ get replaced by $1$,
all occurrences of~$t$ get replaced by $2$,
and so on.
Call the new list $\code(G,\rho,s,t)$.

Given~$L$ as input,
the list $\code(G,\rho,s,t)$ can be computed in logspace.
We start with the first node~$v$ (which is~$s$) and  a counter~$k$, 
that counts the number of different nodes we have seen so far.
In the beginning, we set $k=1$.
\begin{itemize}
\item
If~$v$ occurs for the first time, than we output~$k$ and increase~$k$ by~1.
\item
If~$v$ occurs already to the left of the current position,
then we have to determine the number, $v$ got at its first occurrence.
To do so,
we determine the first occurance of~$v$ and 
then count the number of different nodes to the left of of~$v$ at its first occurance.
It is not hard to see that this can be done in logspace.
\end{itemize}
Then we go to the next node in~$L$. 
Consider the example from above. %in Figure~\ref{F:CanonEdgeList}.
The code constructed from list~$L$ for~$G$ is as follows.

\[
\begin{array}{rccccccccc}
 L =                    & (s,t) & (t,v_3)  &  (v_3,v_2)  &  (v_3,v_1)  &  (v_3,t)  &  (t,v_1)  &  (t,s)  \\
 \code(G,\rho,s,t) =    & (1,2) &  (2,3)    &  (3,4)      &  (3,5)      &  (3,2)    &  (2,5)    &  (2,1)  \\[1ex]
\text{sequel of }  L   &         (s,v_1) &  (v_1,t)  &  (v_1,v_3)  &  (v_1,v_2)  &  (v_1,s)  \\
\text{sequel of } \code &         (1,5)    &  (5,2)    &  (5,3)      &  (5,4)      &  (5,1)  \\[1ex]
\text{sequel of } L   &          (s,v_2)  &  (v_2,v_1)  &  (v_2,v_3)  &  (v_2,s)  \\
\text{sequel of } \code &         (1,4)    &  (4,5)      &  (4,3)      &  (4,1) 
\end{array}
\]

The renaming algorithm works in logspace.
It remains to argue that the new names of the nodes are independent of their names in~$G$.
Let~$H$ be a graph which is isomorphic to~$G$,
and let $\varphi$ be an isomorphism between~$G$ and~$H$.
Note that $\rho \circ \varphi$ is a rotation scheme for~$H$.
Consider the computation of the code for graph~$H$
with rotation scheme $\rho \circ \varphi$  and designated edge $(\varphi(s),\varphi(t))$.
The spanning tree computed in step~1 will be $\varphi(T)$
and the list computed in step~2 will be $\varphi(L)$.
Now the above renaming procedure will give the same number
to node~$v$ in~$L$ and to node $\varphi(v)$ in $\varphi(L)$.
For example nodes~$\varphi(s)$  and ~$\varphi(t)$ will get number~1 and~2, respectively.
It follows that $\code(G,\rho,s,t) = \code(H, \rho \circ \varphi, \varphi(s),\varphi(t))$.
We summarize:

\begin{theorem}
Let $G$ and $H$ be connected, undirected graphs,
let $\rho_G$ be a rotation scheme for~$G$ and $(s,t)$ be an edge in~$G$.
Then 
$G$ and $H$ are isomorphic iff
there exists a rotation scheme~$\rho_H$ for~$H$ and an edge $(u,v)$ in~$H$ such that
$\code(G,\rho_G,s,t) = \code(H, \rho_H, u,v)$.
\end{theorem}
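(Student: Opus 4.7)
The plan is to prove both directions of the biconditional, with the forward direction following essentially from the equivariance observations already developed before the theorem, and the backward direction obtained by reading an explicit isomorphism off the matching codes.

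For the forward direction, suppose $\varphi : V(G) \to V(H)$ is an isomorphism. I would set $\rho_H = \rho_G \circ \varphi^{-1}$ (interpreted so that the rotation at $\varphi(v)$ in $H$ is the $\varphi$-image of the rotation at $v$ in $G$), and take $(u,v) = (\varphi(s),\varphi(t))$. The crucial observation is that each of Steps~1, 2, and~3 is equivariant under isomorphism. In Step~1, the canonical spanning tree depends only on distances and on the rotation scheme, both of which are preserved under $\varphi$; hence the tree built for $(H,\rho_H,u,v)$ is exactly $\varphi(T)$. In Step~2, the traversal rules refer only to membership in the spanning tree and to the local rotation, so the list produced for $H$ is $\varphi(L)$. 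Finally, the renaming in Step~3 assigns integers purely by order of first occurrence in the list, which is insensitive to applying a bijection to the vertex names; therefore $\code(H,\rho_H,u,v) = \code(G,\rho_G,s,t)$.

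For the backward direction, suppose $\code(G,\rho_G,s,t) = \code(H,\rho_H,u,v)$ for some choice of $\rho_H$ and $(u,v)$. Let $L_G$ and $L_H$ be the canonical edge lists produced by Step~2 for the two inputs. Because the renaming procedure is deterministic and the output codes coincide, the enumeration of ``new'' vertices along $L_G$ and along $L_H$ must match position by position: when the $k$-th new integer first appears in the code, it labels some vertex $x_k \in V(G)$ on the $G$-side and some vertex $y_k \in V(H)$ on the $H$-side. I define $\varphi(x_k) = y_k$. Since both codes contain every integer from $1$ to $n$ exactly once as a ``first occurrence'' (each vertex is visited by the traversal), $\varphi$ is a bijection. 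For edge preservation, note that $L_G$ lists every edge of $G$ (each in both directions) and similarly for $L_H$; positions in $L_G$ and $L_H$ yield identical integer pairs, so whenever $(a,b) \in E(G)$ appears at some position in $L_G$ and the corresponding edge at the same position in $L_H$ is $(c,d)$, we have $\varphi(a)=c$ and $\varphi(b)=d$. Hence $\varphi$ is an isomorphism.

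The main obstacle, and the only point where care is genuinely needed, is verifying this equivariance in Step~1: the canonical spanning tree is defined via the BFS-with-rotation procedure, and I must check that applying $\varphi$ commutes with every branching decision. This reduces to two facts: (i) shortest-path distances satisfy $d_H(\varphi(a),\varphi(b)) = d_G(a,b)$, so the test $d(u,w) \leq d(v,w)$ yields the same outcome on both sides, and (ii) the rotation $\rho_H$ is defined as the $\varphi$-conjugate of $\rho_G$, so $\rho_{H,\varphi(u)}(\varphi(u),\varphi(v)) = \varphi(\rho_{G,u}(u,v))$ as oriented edges. Given these, a straightforward induction on $d$ (the outer loop index) shows that the tree edges output for $H$ are exactly the $\varphi$-images of those output for $G$, and the remainder of the argument carries through as described. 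The backward direction needs no further technical work beyond observing that the list $L$ records \emph{all} edges of the graph, which is explicit in Step~2.
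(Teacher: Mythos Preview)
Your proposal is correct and follows essentially the same approach as the paper. The paper's argument, contained in the discussion at the end of Step~3, establishes the forward direction exactly as you do (push the rotation scheme forward along~$\varphi$, observe that the spanning tree becomes $\varphi(T)$, the list becomes $\varphi(L)$, and the renaming is blind to vertex names); the backward direction is left implicit there, and your explicit construction of~$\varphi$ from the first-occurrence numbering, together with the observation that $L$ records every edge, is the natural way to fill it in.
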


This completes the proof of Theorem~\ref{th:iso-planar3}
except for the complexity bound on computing distances in planar graphs.
This is done in the next section.

%######################################################################################
\vskip-0.3cm
\section{Computing Distances in Planar Graphs}\label{s:distances}

We show that distances in planar graphs can be computed in unambiguous logspace.

\begin{theorem}\label{th:dist}
The distance between any two vertices in a planar graph 
can be computed in $\UL \cap \co\UL$.
\end{theorem}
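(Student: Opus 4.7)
The plan is to adapt the Bourke--Tewari--Vinodchandran (BTV) algorithm for planar reachability into a distance algorithm by replacing their unweighted reachability question with a suitably weighted shortest path question that still fits the Reinhard--Allender framework. Recall that BTV produce, in logspace, an edge weighting $w : E \to \{1,\dots,W\}$ with $W = \mathrm{poly}(n)$ that is min-unique on the planar digraph~$G$; under such a weighting the Reinhard--Allender inductive counting algorithm computes the minimum $w$-weight of a path from~$s$ to every vertex in $\UL \cap \co\UL$.

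To measure edge count rather than $w$-weight, I would set $w'(e) = (nW+1) + w(e)$. For any path~$P$, $w'(P) = (nW+1)|P| + w(P)$ with $w(P) \le nW < nW+1$, so comparing $w'$-weights is the same as lex-comparing $(|P|, w(P))$. The $w'$-minimum $s$-to-$v$ path is therefore the unweighted shortest $s$-to-$v$ path with ties broken by $w$-weight, and $d(s,v) = \lfloor \mathrm{dist}_{w'}(s,v)/(nW+1) \rfloor$. Running the BTV / Reinhard--Allender counting algorithm on $(G, w')$ produces $\mathrm{dist}_{w'}(s,v)$ in $\UL \cap \co\UL$, from which $d(s,v)$ is extracted by a logspace post-processing step; by Lemma~\ref{le:log-UL} the whole computation remains in $\UL \cap \co\UL$.

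The main obstacle is to verify that $(G, w')$ is still min-unique, since BTV guarantee only a unique global $w$-minimum, whereas I need the $w$-minimum among paths of minimum length to be unique. This refinement is the technical core of the section. I would handle it either by strengthening BTV's construction (which is built from the face structure of a planar embedding, so the same cycle-space argument can be applied length class by length class), or, if that is delicate, by augmenting the Reinhard--Allender inductive counter with an explicit length index: track, for each pair $(v,\ell)$, the $w$-minimum weight of a length-$\ell$ path from~$s$ to~$v$, exploiting BTV-min-uniqueness on each length slice separately. Either route keeps the state logarithmic and the nondeterministic branching unambiguous, so distance computation stays in $\UL \cap \co\UL$.
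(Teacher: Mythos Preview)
Your high-level idea---add a large additive term to each edge so that the Reinhard--Allender minimum picks out a shortest path, with the original BTV weight used only as a tiebreaker---is exactly the mechanism the paper uses. But your write-up glosses over where the BTV weighting actually lives, and that is where the real work is.

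BTV do \emph{not} produce a min-unique weighting on the planar graph~$G$. Their weighting $w_0$ is defined on the grid graph~$G_{\Grid}$ obtained from~$G$ by the Allender--Datta--Roy reduction. In that reduction every vertex of~$G$ is blown up into a directed cycle and every edge of~$G$ becomes a path in~$G_{\Grid}$, so the number of $G_{\Grid}$-edges on a path has no fixed relation to the number of $G$-edges it represents. Consequently your weighting $w'(e)=(nW+1)+w(e)$, applied to $G_{\Grid}$, prioritizes $G_{\Grid}$-length, not $G$-distance; the quantity $\lfloor \mathrm{dist}_{w'}(s,v)/(nW+1)\rfloor$ is then meaningless for~$G$. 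The paper's fix is to \emph{mark} in~$G_{\Grid}$ exactly those edges that correspond to original $G$-edges, and to add the large term ($n^8$) only to marked edges. The minimum-weight path then minimizes the number of marked edges, which \emph{is} the $G$-distance.

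Your secondary worry---whether the modified weighting is still min-unique---turns out to be a non-issue and does not need a strengthened construction or a per-length counter. BTV actually prove something stronger than ``the minimum-weight path is unique'': writing $w_0(p)=a+bn^4$, they show that any two \emph{distinct} simple paths between the same endpoints have different $a$-values. Hence adding any large multiples of $n^4$ (here $n^8$ on marked edges) on top of $w_0$ preserves uniqueness of the minimum automatically. Once you route the argument through $G_{\Grid}$ with marked edges, the min-uniqueness you need is already there.
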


Bourke, Tewari, and Vinodchandran~\cite{BouTewVin07} showed that
the reachability problem for planar directed graphs is in $\UL \cap \co\UL$.
Their algorithm is essentially based on  two results:

\begin{enumerate}
\item
Allender, Datta, and Roy~\cite{AllDatRoy05} showed that the
reachability problem for planar directed graphs can be reduced  to grid graph reachability.
Grid graphs are graphs who's vertices can be identified with the grid points in a
2-dimensional grid with the edges connecting only the direct
horizontal or vertical neighbors.
\item
Reinhard and Allender~\cite{ReiAll00} showed that the $\NL$-complete reachability problem for directed graphs
is in $\UL \cap \co\UL$ if there is a logspace computable weight function for the edges
such that for every pair of vertices~$u$ and~$v$,
if there is a path from~$u$ to~$v$, 
then there is a unique minimum weight shortest path between~$u$ and~$v$.
\end{enumerate}
Bourke, Tewari, and Vinodchandran~\cite{BouTewVin07}
provide such a weight function for grid graphs.
Therefore the reachability problem for planar directed graphs is in $\UL \cap \co\UL$.

We modify this algorithm in order to determine distances between nodes
in the given planar graph~$G$.
This is adapted from the Reinhard-Allender algorithm applied to the weighted grid graph
computed from~$G$.
Here, we only describe the changes that have to be made in the cited references.

We start by considering the reduction 
from reachability for a planar graph~$G$ to a grid graph~$G_{\Grid}$~\cite{AllDatRoy05}.
The reduction from~$G$ to~$G_{\Grid}$ is a special combinatorial embedding
that introduces only degree 2 nodes, 
thereby it preserves the exact number of paths between any two original vertices. 
Vertices in~$G$ are replaced by directed cycles and edges in~$G$ are replaced by paths 
such that they can be embedded into a grid.
For our purpose it suffices to note that one can modify the construction 
and {\em mark the original edges of\/}~$G$ in~$G_{\Grid}$.
Hence if we consider paths in $G_{\Grid}$ and count only the marked edges, 
we get distances in~$G$.

The next step is to define a weight function such that shortest paths in $G_{\Grid}$ 
with respect to marked edges are unique.
Bourke, Tewari, and Vinodchandran~\cite{BouTewVin07} defined the following weight function.
For an edge~$e$ let
\[
w_0(e) = 
\begin{cases}
n^4, & \text{if $e$ is an east or west edge},\\
n^4 + i, & \text{if $e$ is a north edge in column $i$},\\
n^4 - i, & \text{if $e$ is a south edge in column $i$}.
\end{cases}
\]
Let~$p$ be a  path in $G_{\Grid}$.
The weight~$w_0(p)$ is the sum of the weights of the edges on~$p$
and can be written as $a + bn^4$.
Clearly, $b$ is the number of edges on~$p$.
Also, it is easy to see that if another  path~$p'$ with weight $w_0(p') = a' + b' n^4$ has the same weight as~$p$,
i.e..\ $w_0(p) = w_0(p')$, then $a = a'$ and $b = b'$.
This enforces that when we consider shortest paths between two nodes,
these paths must have the same number of edges.
The crucial part now is the value of~$a$.
Let $p$ and~$p'$ be different simple paths connecting the same two vertices.
Then Bourke, Tewari, and Vinodchandran~\cite{BouTewVin07} showed that $a \not= a'$.
It follows that the minimum weight path with respect to~$w_0$  is always unique.

Now we modify the weight function in order to give priority to  the marked edges.
That is, we define
\[
w(e) = 
\begin{cases}
w_0(e) + n^8, & \text{if $e$ is marked},\\
w_0(e), & \text{otherwise}.
\end{cases}
\]
Clearly,
minimum weight paths must minimize the number of marked edges.
The next parameter to minimize is the number of all edges on a path.
Finally,
by the same argument as above,
the $a$-values of different simple paths that connect the same two vertices will be different.
It follows  that the minimum weight path with respect to~$w$  is always unique.

Reinhard and Allender~\cite{ReiAll00} extended the counting technique
of Immerman~\cite{Immerman88} and Szelepcs{\'e}nyi~\cite{Szelepcsenyi88}.
In addition to the number of nodes within distance~$k$ from some start node~$s$,
they also sum up the length of the shortest paths to these nodes.
If the shortest paths are unique
then they show that the predicate $d(s,v) \leq k$ is in $\UL \cap \co\UL$.
The distance~$d$ now refers to~$G_{\Grid}$ because this is the input of the algorithm.
By augmenting the algorithm with a counter for marked edges
we also can refer to distances in~$G$ by construction of the weight function~$w$.
This suffices for our purpose
because by several invocations of this procedure with different~$k$'s
we can determine $d(s,v)$ for any~$s$ and~$v$ in $\UL \cap \co\UL$, 
where~$d$ is the distance in~$G$.

%######################################################################################
\section{Oriented Graph Isomorphism}\label{S:OGI}

In the previous sections we have considered planar graphs,
where the planar embedding is provided by a rotation scheme.
It is also interesting to consider {\em arbitrary\/} (undirected) graphs 
with a rotation scheme that induces some orientation,  i.e.\ cyclic order, on the edges.
In the {\em isomorphism problem for oriented graphs\/} we have given two graphs, each with a rotation scheme.
One has to decide whether there is an isomorphism between the graphs that respects the orientation.

Miller and Reif~\cite{MilRei91} proved
that the isomorphism problem for oriented graphs is in $\AC^1$.
We improve the complexity bound to $\NL$.
The proof goes along the same lines as for planar-GI:
compute a canonical form for each of the graphs according to the given rotation schemes
such that precisely in the isomorphic case, these canonical forms are equal.

\begin{theorem}\label{T:OGI}
The oriented graph isomorphism problem is in $\NL$.
\end{theorem}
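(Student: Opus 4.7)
The plan is to reuse the three-step canonical form construction from the proof of Theorem~\ref{th:iso-planar3}. Inspecting that proof, the only place where planarity was actually used was in step~1, to invoke the $\UL \cap \co\UL$ distance bound of Theorem~\ref{th:dist}; the three-step algorithm and its canonicity argument use only the rotation scheme, the designated edge, and distances in the graph. For an arbitrary (undirected) graph $G$ with a rotation scheme $\rho$, the very same pseudo-code therefore produces a canonical spanning tree~$T$, a canonical edge list~$L$, and a canonical code $\code(G,\rho,s,t)$ depending only on $\rho$, $(s,t)$, and $E(G)$.

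First I would verify that step~1 now lies in $\NL$. The pseudo-code queries only the predicates ``$d(s,w) = d$'' and ``$d(u,w) \leq d(v,w)$'' alongside logspace bookkeeping. In arbitrary graphs distances are $\NL$-computable by inductive counting~\cite{Immerman88,Szelepcsenyi88}, and since $\NL = \co\NL$ both predicates lie in $\NL$. Combined with $\L^{\NL} = \NL$, the pseudo-code for step~1 defines a canonical spanning tree that is $\NL$-computable bit by bit. Steps~2 and~3 are already in $\L$ given~$T$, so by the same closure each output symbol of $\code(G,\rho,s,t)$ is decidable in $\NL$.

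For the isomorphism test itself, the same canonicity argument used at the end of Section~\ref{s:3planar-GI} (now with $\rho_H$ held fixed rather than quantified over planar embeddings) shows that an orientation-preserving isomorphism from $G$ to $H$ taking $(s,t)$ to $(u,v)$ exists iff $\code(G,\rho_G,s,t) = \code(H,\rho_H,u,v)$. The algorithm therefore fixes any edge $(s,t)$ of $G$, nondeterministically guesses an edge $(u,v) \in E(H)$, and checks equality of the two codes. Inequality is in $\NL$: guess a position~$i$ and verify that the two symbols at that position differ, using $\NL = \co\NL$ to negate the symbol-predicates. Closure of $\NL$ under complement once more puts equality in $\NL$, and wrapping the outer existential guess keeps the whole test in $\NL$.

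The main obstacle, modest as it is, lies in bookkeeping the nested nondeterminism. In the planar case the authors could lean on $\FUL \circ \FUL = \FUL$ and treat the code as a function. Here no analogous functional class for $\NL$ is available, so the plan is to treat $\code(G,\rho,s,t)$ strictly as a family of $\NL$-predicates, one per output position, and to invoke $\NL = \co\NL$ precisely where a sub-predicate needs to be negated; this is what keeps both the symbol-equality check and the outer verification inside $\NL$.
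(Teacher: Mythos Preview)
Your approach is exactly the paper's: reuse the three-step canonical form, observing that the distance subroutine is the only non-logspace ingredient and that in arbitrary graphs distances are computable in $\NL$, so the whole construction sits in $\L^{\NL}=\NL$. The one point the paper handles that you omit is disconnected inputs: the spanning-tree step presupposes connectivity, so the paper first splits into connected components in logspace, canonizes each, and sorts the resulting codes lexicographically; add that preprocessing and your argument is complete. (Your closing worry about a missing functional analogue of $\FUL$ for $\NL$ is unnecessary: the same $\L^{\NL}=\NL$ closure you already invoke does the job of composing the steps, just as Lemma~\ref{le:log-UL} did in the planar case.)
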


It suffices to analyse the complexity of computing a canonical form for  a graph~$G$ and a rotation scheme~$\rho$.
If $G$ is not connected, 
then we determine  the connected components in logspace~\cite{NisTas95,Reingold05} and compute 
canonical forms for each of them. 
Then we sort these canonical forms lexicographically and write them onto the output tape. 
Thus, we may assume that~$G$ is connected.

The three steps to compute a canonical form for a planar graph were all in logspace,
except for the subroutine to compute distances,
which was in $\UL \cap \co\UL$.
Without planarity,
the best upper bound for computing distances in a graph is $\NL$:
to determine if $d(u,v) \leq k$ simply guess a path of length $\leq k$ from~$u$ to~$v$.
This proves Theorem~\ref{T:OGI}.

%######################################################################################
\section{Hardness of Planar 3-Connected GI}\label{s:hardness}

Lindell~\cite{Lindell92} proved that tree isomorphism (TI) is in $\L$.
In fact, TI is complete for $\L$~\cite{JennerEtAl03}.
Since trees are planar graphs,
planar-GI is hard for $\L$.
We show that the problem remains hard for $\L$
even when restricted to planar 3-connected graphs.
All the hardness and completeness results in this section 
are with respect to  $\AC^0$-many-one reductions.

\begin{theorem}\label{T:Planar3ConnGIHardforL}
Planar 3-connected graph isomorphism is hard for $\L$. 
\end{theorem}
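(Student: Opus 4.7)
The plan is to give an $\AC^0$-many-one reduction from tree isomorphism (TI) to planar 3-connected graph isomorphism; since TI is $\L$-complete under $\AC^0$-reductions by~\cite{JennerEtAl03}, this yields the claimed lower bound. Given two trees $T_1,T_2$, I would construct in $\AC^0$ planar 3-connected graphs $G(T_1),G(T_2)$ with $G(T_1)\cong G(T_2)$ iff $T_1\cong T_2$.

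The construction is a local gadget substitution. First pad $T$ so that every vertex has degree at least three; this is a constant-size local operation that does not change the isomorphism type. Then replace each vertex $v\in V(T)$ of degree $d_v$ by a wheel $W_{d_v}$---a rim cycle of length $d_v$ together with a hub adjacent to every rim vertex---the $d_v$ rim vertices serving as ports for the $d_v$ tree edges at $v$. For each tree edge $\{u,v\}$ the corresponding ports of $W_u$ and $W_v$ must be joined by a small bridging patch that supplies three internally vertex-disjoint paths between the two wheels, since a naive identification of a single port (or joining by a single edge) would produce a separator of size one or two. A candidate patch is a triangular bridge consisting of the two ports plus one auxiliary vertex, together with one extra edge on each side from the patch to a rim-neighbor of the port. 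Designing this joint so that the whole graph is simultaneously planar and 3-connected is the main obstacle.

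First I would fix the bridging patch and verify planarity, by drawing each wheel inside a small disk around its tree-vertex in a fixed planar embedding of $T$ and routing the patches along the tree edges, and 3-connectivity, by a finite case analysis on candidate separating pairs that reduces to the 3-connectivity of each $W_{d_v}$ together with the three vertex-disjoint crossings contributed by each patch. Then I would argue the isomorphism equivalence. The direction $(\Rightarrow)$ is immediate: a tree isomorphism lifts to a graph isomorphism in the obvious way. For $(\Leftarrow)$, I would exhibit a local invariant identifying the hubs intrinsically---for instance, a hub is characterized by the fact that its neighborhood induces a cycle, whereas no rim or patch vertex has this property---so every isomorphism $\varphi\colon G(T_1)\to G(T_2)$ maps hubs to hubs and ports to ports. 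Contracting each wheel to its hub then recovers $T_i$, and $\varphi$ descends to a tree isomorphism $T_1\to T_2$. Finally, the reduction is in $\AC^0$: the gadget at $v$ depends only on $d_v$, and the edge list of $G(T)$ is produced by a circuit that inspects constant-size neighborhoods of $T$, using concatenation of a tree-vertex index and a constant-size gadget index to name the new vertices.
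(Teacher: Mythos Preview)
Your proposal has a genuine gap, and it is precisely in the direction you call ``immediate''. Replacing a tree vertex $v$ of degree $d_v$ by a wheel $W_{d_v}$ forces you to assign the $d_v$ incident tree edges to the $d_v$ rim ports, and any such assignment imposes a \emph{cyclic order} on the edges at~$v$. An abstract tree carries no such order, and the automorphism group of $W_{d_v}$ acts on the rim only dihedrally, not as the full symmetric group. Hence two labelings of the same tree can yield non-isomorphic outputs: if $v$ has four neighbours leading to pairwise non-isomorphic rigid subtrees $A,B,C,D$, one labeling may place them around the rim in cyclic order $A,B,C,D$ and another in order $A,C,B,D$, and these two orders are not related by any element of the dihedral group~$D_4$. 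So $T_1\cong T_2$ does \emph{not} imply $G(T_1)\cong G(T_2)$, and the reduction is incorrect. The obstruction is not an artefact of choosing wheels: by Whitney's theorem any planar 3-connected gadget has an essentially unique embedding, so if its $d$ ports lie on a common face (as they must, in order to attach planarly to the rest of the construction) its automorphism group can permute them at most dihedrally. There is no planar 3-connected port gadget with $S_d$-symmetric ports for $d\ge 4$, so a purely local gadget reduction from \emph{general} tree isomorphism cannot work along these lines.

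The paper sidesteps this by reducing from $\Ord$ directly rather than from arbitrary tree isomorphism. The trees produced by the $\Ord$-to-TI reduction of~\cite{JennerEtAl03} are two paths glued at a root with a single crossover edge: every vertex has degree at most~$2$ except one vertex $w_i$ of degree~$3$, and for degree~$3$ the dihedral group \emph{is} the full symmetric group, so no ordering ambiguity can arise. Each path vertex is then split into three parallel tracks, the lone degree-$3$ vertex is handled by a $3\times 3$ grid, and the marker at $v_j$ is realised by a single extra edge $(u_j^0,u_j^2)$ rather than by pendant vertices, so no vertex of degree~$\ge 4$ ever appears. (Your padding step is also imprecise---a constant-size pendant gadget either leaves its own vertices with degree below~$3$ or already takes you outside the class of trees before you describe the wheel construction ``for each tree edge''---but that is secondary to the ordering problem above.)
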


We reduce from the known $\L$-complete problem $\Ord$
which is defined as follows.

\begin{quote}
{\bf Order between Vertices} ($\Ord$)\\
Input:
a directed graph $G = (V,E)$ that is a line, and $s,t \in V$.\\
Decide
whether $s < t$ in the total order induced on~$V$ by~$G$.
\end{quote}

We first describe the reduction from $\Ord$ to TI~\cite{JennerEtAl03}.
Let $v_1, \dots, v_n$ be the nodes of~$G$ in the order they appear on the line in~$G$.
In particular, $v_1$ is the unique node with in-degree~0
and $v_n$ is the unique node with out-degree~0.
Let $s = v_i$ and $t = v_j$.
W.l.o.g.\ assume that $i \not= n$ 
(otherwise map the instance to a non-isomorphic pair of trees).
The (undirected) tree~$T$ constructed from~$G$
has two copies $u_1, \dots, u_n$ and $w_1, \dots, w_n$ of the line of~$G$,
and there is an additional node~$r$ that is connected to~$u_1$ and~$w_1$.
Up to this point, we have constructed one long line.
Now the trick is to interrupt this line:
take out the edge $(u_i,u_{i+1})$ and instead put the edge $(w_i,u_{i+1})$.
Let~$T$ be the resulting tree.

%\begin{figure}[htbp]
\begin{center}
\scalebox{0.8}{ 
\setlength{\unitlength}{0.00083333in}
\begingroup\makeatletter\ifx\SetFigFont\undefined
% extract first six characters in \fmtname
\def\x#1#2#3#4#5#6#7\relax{\def\x{#1#2#3#4#5#6}}%
\expandafter\x\fmtname xxxxxx\relax \def\y{splain}%
\ifx\x\y   % LaTeX or SliTeX?
\gdef\SetFigFont#1#2#3{%
  \ifnum #1<17\tiny\else \ifnum #1<20\small\else
  \ifnum #1<24\normalsize\else \ifnum #1<29\large\else
  \ifnum #1<34\Large\else \ifnum #1<41\LARGE\else
     \huge\fi\fi\fi\fi\fi\fi
  \csname #3\endcsname}%
\else
\gdef\SetFigFont#1#2#3{\begingroup
  \count@#1\relax \ifnum 25<\count@\count@25\fi
  \def\x{\endgroup\@setsize\SetFigFont{#2pt}}%
  \expandafter\x
    \csname \romannumeral\the\count@ pt\expandafter\endcsname
    \csname @\romannumeral\the\count@ pt\endcsname
  \csname #3\endcsname}%
\fi
\fi\endgroup
{\renewcommand{\dashlinestretch}{30}
\begin{picture}(6807,1367)(0,-10)
\put(3756,1196){\makebox(0,0)[lb]{{\SetFigFont{12}{14.4}{rm}$u_1$}}}
\put(4281,1196){\makebox(0,0)[lb]{{\SetFigFont{12}{14.4}{rm}$u_2$}}}
\put(5031,1196){\makebox(0,0)[lb]{{\SetFigFont{12}{14.4}{rm}$u_i$}}}
\put(5556,1196){\makebox(0,0)[lb]{{\SetFigFont{12}{14.4}{rm}$u_{i+1}$}}}
\put(6306,1196){\makebox(0,0)[lb]{{\SetFigFont{12}{14.4}{rm}$u_n$}}}
\put(3900,1099){\blacken\ellipse{72}{72}}
\put(3900,1099){\ellipse{72}{72}}
\put(3600,874){\blacken\ellipse{72}{72}}
\put(3600,874){\ellipse{72}{72}}
\put(3900,649){\blacken\ellipse{72}{72}}
\put(3900,649){\ellipse{72}{72}}
\put(4425,1099){\blacken\ellipse{72}{72}}
\put(4425,1099){\ellipse{72}{72}}
\put(4425,649){\blacken\ellipse{72}{72}}
\put(4425,649){\ellipse{72}{72}}
\put(5175,1099){\blacken\ellipse{72}{72}}
\put(5175,1099){\ellipse{72}{72}}
\put(5175,649){\blacken\ellipse{72}{72}}
\put(5175,649){\ellipse{72}{72}}
\put(5700,649){\blacken\ellipse{72}{72}}
\put(5700,649){\ellipse{72}{72}}
\put(5700,1099){\blacken\ellipse{72}{72}}
\put(5700,1099){\ellipse{72}{72}}
\put(6450,649){\blacken\ellipse{72}{72}}
\put(6450,649){\ellipse{72}{72}}
\put(6450,1099){\blacken\ellipse{72}{72}}
\put(6450,1099){\ellipse{72}{72}}
\path(3600,874)(3900,1099)
\path(3900,1099)(4425,1099)
\dashline{60.000}(4500,1099)(5100,1099)
\path(3600,874)(3900,649)
\path(3900,649)(4425,649)
\dashline{60.000}(4500,649)(5100,649)
\path(5175,649)(5700,649)
\path(5166,661)(5700,1099)
\dashline{60.000}(5775,1099)(6375,1099)
\dashline{60.000}(5775,649)(6375,649)
\put(3450,949){\makebox(0,0)[lb]{{\SetFigFont{12}{14.4}{rm}$r$}}}
\put(3750,424){\makebox(0,0)[lb]{{\SetFigFont{12}{14.4}{rm}$w_1$}}}
\put(4275,424){\makebox(0,0)[lb]{{\SetFigFont{12}{14.4}{rm}$w_2$}}}
\put(5025,424){\makebox(0,0)[lb]{{\SetFigFont{12}{14.4}{rm}$w_i$}}}
\put(5550,424){\makebox(0,0)[lb]{{\SetFigFont{12}{14.4}{rm}$w_{i+1}$}}}
\put(6300,424){\makebox(0,0)[lb]{{\SetFigFont{12}{14.4}{rm}$w_n$}}}
\put(150,874){\blacken\ellipse{72}{72}}
\put(150,874){\ellipse{72}{72}}
\put(675,874){\blacken\ellipse{72}{72}}
\put(675,874){\ellipse{72}{72}}
\put(1425,874){\blacken\ellipse{72}{72}}
\put(1425,874){\ellipse{72}{72}}
\put(1950,874){\blacken\ellipse{72}{72}}
\put(1950,874){\ellipse{72}{72}}
\put(2700,874){\blacken\ellipse{72}{72}}
\put(2700,874){\ellipse{72}{72}}
\dashline{60.000}(750,874)(1350,874)
\blacken\path(1305.000,859.000)(1350.000,874.000)(1305.000,889.000)(1305.000,859.000)
\path(1425,874)(1875,874)
\blacken\path(1830.000,859.000)(1875.000,874.000)(1830.000,889.000)(1830.000,859.000)
\path(150,874)(600,874)
\blacken\path(555.000,859.000)(600.000,874.000)(555.000,889.000)(555.000,859.000)
\dashline{60.000}(2025,874)(2625,874)
\blacken\path(2580.000,859.000)(2625.000,874.000)(2580.000,889.000)(2580.000,859.000)
\put(0,649){\makebox(0,0)[lb]{{\SetFigFont{12}{14.4}{rm}$v_1$}}}
\put(525,649){\makebox(0,0)[lb]{{\SetFigFont{12}{14.4}{rm}$v_2$}}}
\put(1275,649){\makebox(0,0)[lb]{{\SetFigFont{12}{14.4}{rm}$v_i$}}}
\put(1800,649){\makebox(0,0)[lb]{{\SetFigFont{12}{14.4}{rm}$v_{i+1}$}}}
\put(2550,649){\makebox(0,0)[lb]{{\SetFigFont{12}{14.4}{rm}$v_n$}}}
\put(5100,49){\makebox(0,0)[lb]{{\SetFigFont{12}{14.4}{rm}$T$}}}
\put(1350,49){\makebox(0,0)[lb]{{\SetFigFont{12}{14.4}{rm}$G$}}}
\end{picture}
}
}
%\caption{Construction of tree $T$}\label{F:Ord2TriplyPlGI}
\end{center}
%\end{figure}

Note that there is a unique non-trivial automorphism for~$T$:
exchange $u_{i+1}$ and $w_{i+1}$, $\dots$, $u_n$ and $w_n$,
and map the other vertices onto themselves.
We construct two trees~$T_1$ and~$T_2$ from~$T$.
With respect to~$T$,
tree~$T_1$ has two extra nodes~$x_0,x_1$ which are connected with node~$u_j$,
and~$T_2$ has  extra nodes~$y_0,y_1$ which are connected with node~$w_j$.
The extra edges enforce that an isomorphism between~$T_1$ and~$T_2$
has to map~$u_j$ to~$w_j$, because these are the only nodes of degree~4 (for $j < n$).
Now,
if $v_i < v_j$, then the above automorphism of~$T$ yields an isomorphism between~$T_1$ and~$T_2$.
On the other hand,
if $v_i \geq v_j$, then there is no isomorphism between~$T_1$ and~$T_2$.

We modify~$T$ to a graph~$H$ that is no longer a tree,
but planar and 3-connected.
Split each node~$v$ of degree~1 or~2 in~$T$ into three nodes $v^0, v^1, v^2$.
Connect these nodes via edges $(v^0,v^1)$ and $(v^1,v^2)$.
If~$v$ has degree~1, then additionally put  the edge $(v^0,v^2)$. 
Now, if $(u,v)$ is an edge in~$T$, where~$u$ and~$v$ have degree~1 or~2,
then we have edges $(u^0,v^0)$, $(u^1,v^1)$, and $(u^2,v^2)$ in~$H$.
The following picture illustrates the situation.
In~(a), node~$v$ has degree~2,
in~(b), node~$v$ has degree~1.

\begin{center}
\scalebox{0.8}{ 
\setlength{\unitlength}{0.00083333in}
\begingroup\makeatletter\ifx\SetFigFont\undefined
% extract first six characters in \fmtname
\def\x#1#2#3#4#5#6#7\relax{\def\x{#1#2#3#4#5#6}}%
\expandafter\x\fmtname xxxxxx\relax \def\y{splain}%
\ifx\x\y   % LaTeX or SliTeX?
\gdef\SetFigFont#1#2#3{%
  \ifnum #1<17\tiny\else \ifnum #1<20\small\else
  \ifnum #1<24\normalsize\else \ifnum #1<29\large\else
  \ifnum #1<34\Large\else \ifnum #1<41\LARGE\else
     \huge\fi\fi\fi\fi\fi\fi
  \csname #3\endcsname}%
\else
\gdef\SetFigFont#1#2#3{\begingroup
  \count@#1\relax \ifnum 25<\count@\count@25\fi
  \def\x{\endgroup\@setsize\SetFigFont{#2pt}}%
  \expandafter\x
    \csname \romannumeral\the\count@ pt\expandafter\endcsname
    \csname @\romannumeral\the\count@ pt\endcsname
  \csname #3\endcsname}%
\fi
\fi\endgroup
{\renewcommand{\dashlinestretch}{30}
\begin{picture}(4284,1551)(0,-10)
\put(1875,1329){\blacken\ellipse{72}{72}}
\put(1875,1329){\ellipse{72}{72}}
\put(1875,429){\blacken\ellipse{72}{72}}
\put(1875,429){\ellipse{72}{72}}
\put(1875,879){\blacken\ellipse{72}{72}}
\put(1875,879){\ellipse{72}{72}}
\put(2325,879){\blacken\ellipse{72}{72}}
\put(2325,879){\ellipse{72}{72}}
\put(2325,1329){\blacken\ellipse{72}{72}}
\put(2325,1329){\ellipse{72}{72}}
\put(2325,429){\blacken\ellipse{72}{72}}
\put(2325,429){\ellipse{72}{72}}
\path(2325,879)(2325,429)
\path(2325,1329)(2325,879)
\path(1875,429)(2325,429)
\path(1875,879)(1875,429)
\path(1875,1329)(2325,1329)
\path(1875,1329)(1875,879)
\path(1875,879)(2325,879)
\put(1650,1404){\makebox(0,0)[lb]{{\SetFigFont{12}{14.4}{rm}$u^0$}}}
\put(1650,954){\makebox(0,0)[lb]{{\SetFigFont{12}{14.4}{rm}$u^1$}}}
\put(1650,504){\makebox(0,0)[lb]{{\SetFigFont{12}{14.4}{rm}$u^2$}}}
\put(2100,1404){\makebox(0,0)[lb]{{\SetFigFont{12}{14.4}{rm}$v^0$}}}
\put(2100,954){\makebox(0,0)[lb]{{\SetFigFont{12}{14.4}{rm}$v^1$}}}
\put(2100,504){\makebox(0,0)[lb]{{\SetFigFont{12}{14.4}{rm}$v^2$}}}
\put(3525,1329){\blacken\ellipse{72}{72}}
\put(3525,1329){\ellipse{72}{72}}
\put(3525,429){\blacken\ellipse{72}{72}}
\put(3525,429){\ellipse{72}{72}}
\put(3525,879){\blacken\ellipse{72}{72}}
\put(3525,879){\ellipse{72}{72}}
\put(3975,879){\blacken\ellipse{72}{72}}
\put(3975,879){\ellipse{72}{72}}
\put(3975,1329){\blacken\ellipse{72}{72}}
\put(3975,1329){\ellipse{72}{72}}
\put(3975,429){\blacken\ellipse{72}{72}}
\put(3975,429){\ellipse{72}{72}}
\path(3975,879)(3975,429)
\path(3975,1329)(3975,879)
\path(3525,429)(3975,429)
\path(3525,879)(3525,429)
\path(3525,1329)(3975,1329)
\path(3525,1329)(3525,879)
\path(3525,879)(3975,879)
\put(3300,1404){\makebox(0,0)[lb]{{\SetFigFont{12}{14.4}{rm}$u^0$}}}
\put(3300,954){\makebox(0,0)[lb]{{\SetFigFont{12}{14.4}{rm}$u^1$}}}
\put(3300,504){\makebox(0,0)[lb]{{\SetFigFont{12}{14.4}{rm}$u^2$}}}
\put(3750,1404){\makebox(0,0)[lb]{{\SetFigFont{12}{14.4}{rm}$v^0$}}}
\put(3750,954){\makebox(0,0)[lb]{{\SetFigFont{12}{14.4}{rm}$v^1$}}}
\put(3750,504){\makebox(0,0)[lb]{{\SetFigFont{12}{14.4}{rm}$v^2$}}}
\put(3787.500,879.000){\arc{975.000}{5.1072}{7.4592}}
\put(75,879){\blacken\ellipse{72}{72}}
\put(75,879){\ellipse{72}{72}}
\put(675,879){\blacken\ellipse{72}{72}}
\put(675,879){\ellipse{72}{72}}
\path(75,879)(675,879)
\put(0,1029){\makebox(0,0)[lb]{{\SetFigFont{12}{14.4}{rm}$u$}}}
\put(600,1029){\makebox(0,0)[lb]{{\SetFigFont{12}{14.4}{rm}$v$}}}
\put(1950,54){\makebox(0,0)[lb]{{\SetFigFont{12}{14.4}{rm}(a)}}}
\put(3600,54){\makebox(0,0)[lb]{{\SetFigFont{12}{14.4}{rm}(b)}}}
\end{picture}
}
}
%\caption{Construction of tree $T$}\label{F:Ord2TriplyPlGI}
\end{center}

A special case is node~$w_i$ which has degree~3.
For~$w_i$ we need a gadget with~9 nodes which are connected as a~$3 \times 3$ grid.
The connections from this graph gadget (bold lines) to the other nodes are shown in the following picture.

\begin{center}
\scalebox{0.8}{ 
\setlength{\unitlength}{0.00083333in}
\begingroup\makeatletter\ifx\SetFigFont\undefined
% extract first six characters in \fmtname
\def\x#1#2#3#4#5#6#7\relax{\def\x{#1#2#3#4#5#6}}%
\expandafter\x\fmtname xxxxxx\relax \def\y{splain}%
\ifx\x\y   % LaTeX or SliTeX?
\gdef\SetFigFont#1#2#3{%
  \ifnum #1<17\tiny\else \ifnum #1<20\small\else
  \ifnum #1<24\normalsize\else \ifnum #1<29\large\else
  \ifnum #1<34\Large\else \ifnum #1<41\LARGE\else
     \huge\fi\fi\fi\fi\fi\fi
  \csname #3\endcsname}%
\else
\gdef\SetFigFont#1#2#3{\begingroup
  \count@#1\relax \ifnum 25<\count@\count@25\fi
  \def\x{\endgroup\@setsize\SetFigFont{#2pt}}%
  \expandafter\x
    \csname \romannumeral\the\count@ pt\expandafter\endcsname
    \csname @\romannumeral\the\count@ pt\endcsname
  \csname #3\endcsname}%
\fi
\fi\endgroup
{\renewcommand{\dashlinestretch}{30}
\begin{picture}(7029,3230)(0,-10)
\put(6344,3047){\makebox(0,0)[lb]{{\SetFigFont{12}{14.4}{rm}$u_n^0$}}}
\put(6344,2597){\makebox(0,0)[lb]{{\SetFigFont{12}{14.4}{rm}$u_n^1$}}}
\put(6344,2147){\makebox(0,0)[lb]{{\SetFigFont{12}{14.4}{rm}$u_n^2$}}}
\put(4474,2159){\makebox(0,0)[lb]{{\SetFigFont{12}{14.4}{rm}$u_{i+1}^2$}}}
\put(4474,2609){\makebox(0,0)[lb]{{\SetFigFont{12}{14.4}{rm}$u_{i+1}^1$}}}
\put(4474,3059){\makebox(0,0)[lb]{{\SetFigFont{12}{14.4}{rm}$u_{i+1}^0$}}}
\put(6337,912){\makebox(0,0)[lb]{{\SetFigFont{12}{14.4}{rm}$w_n^1$}}}
\put(6337,1362){\makebox(0,0)[lb]{{\SetFigFont{12}{14.4}{rm}$w_n^2$}}}
\put(6337,462){\makebox(0,0)[lb]{{\SetFigFont{12}{14.4}{rm}$w_n^0$}}}
\put(4485,882){\makebox(0,0)[lb]{{\SetFigFont{12}{14.4}{rm}$w_{i+1}^1$}}}
\put(4485,432){\makebox(0,0)[lb]{{\SetFigFont{12}{14.4}{rm}$w_{i+1}^0$}}}
\put(4485,1332){\makebox(0,0)[lb]{{\SetFigFont{12}{14.4}{rm}$w_{i+1}^2$}}}
\put(2167,867){\makebox(0,0)[lb]{{\SetFigFont{12}{14.4}{rm}$w_{i-1}^1$}}}
\put(2167,1317){\makebox(0,0)[lb]{{\SetFigFont{12}{14.4}{rm}$w_{i-1}^2$}}}
\put(2167,417){\makebox(0,0)[lb]{{\SetFigFont{12}{14.4}{rm}$w_{i-1}^0$}}}
\put(1228,1313){\makebox(0,0)[lb]{{\SetFigFont{12}{14.4}{rm}$w_1^2$}}}
\put(1228,863){\makebox(0,0)[lb]{{\SetFigFont{12}{14.4}{rm}$w_1^1$}}}
\put(1228,413){\makebox(0,0)[lb]{{\SetFigFont{12}{14.4}{rm}$w_1^0$}}}
\put(6412.500,1174.000){\arc{975.000}{5.1072}{7.4592}}
\put(6412.500,2517.000){\arc{975.000}{5.1072}{7.4592}}
\put(2812.500,2524.000){\arc{975.000}{5.1072}{7.4592}}
\put(6150,1624){\blacken\ellipse{72}{72}}
\put(6150,1624){\ellipse{72}{72}}
\put(6150,1174){\blacken\ellipse{72}{72}}
\put(6150,1174){\ellipse{72}{72}}
\put(6150,724){\blacken\ellipse{72}{72}}
\put(6150,724){\ellipse{72}{72}}
\put(6600,724){\blacken\ellipse{72}{72}}
\put(6600,724){\ellipse{72}{72}}
\put(6600,1174){\blacken\ellipse{72}{72}}
\put(6600,1174){\ellipse{72}{72}}
\put(6600,1624){\blacken\ellipse{72}{72}}
\put(6600,1624){\ellipse{72}{72}}
\put(6150,2067){\blacken\ellipse{72}{72}}
\put(6150,2067){\ellipse{72}{72}}
\put(6150,2517){\blacken\ellipse{72}{72}}
\put(6150,2517){\ellipse{72}{72}}
\put(6150,2967){\blacken\ellipse{72}{72}}
\put(6150,2967){\ellipse{72}{72}}
\put(6600,2067){\blacken\ellipse{72}{72}}
\put(6600,2067){\ellipse{72}{72}}
\put(6600,2517){\blacken\ellipse{72}{72}}
\put(6600,2517){\ellipse{72}{72}}
\put(6600,2967){\blacken\ellipse{72}{72}}
\put(6600,2967){\ellipse{72}{72}}
\put(3000,1174){\blacken\ellipse{72}{72}}
\put(3000,1174){\ellipse{72}{72}}
\put(3000,1624){\blacken\ellipse{72}{72}}
\put(3000,1624){\ellipse{72}{72}}
\put(3000,724){\blacken\ellipse{72}{72}}
\put(3000,724){\ellipse{72}{72}}
\put(3450,724){\blacken\ellipse{72}{72}}
\put(3450,724){\ellipse{72}{72}}
\put(3450,1174){\blacken\ellipse{72}{72}}
\put(3450,1174){\ellipse{72}{72}}
\put(3450,1624){\blacken\ellipse{72}{72}}
\put(3450,1624){\ellipse{72}{72}}
\put(3900,1624){\blacken\ellipse{72}{72}}
\put(3900,1624){\ellipse{72}{72}}
\put(3900,1174){\blacken\ellipse{72}{72}}
\put(3900,1174){\ellipse{72}{72}}
\put(3900,724){\blacken\ellipse{72}{72}}
\put(3900,724){\ellipse{72}{72}}
\put(4875,1624){\blacken\ellipse{72}{72}}
\put(4875,1624){\ellipse{72}{72}}
\put(4875,1174){\blacken\ellipse{72}{72}}
\put(4875,1174){\ellipse{72}{72}}
\put(4875,724){\blacken\ellipse{72}{72}}
\put(4875,724){\ellipse{72}{72}}
\put(5325,1174){\blacken\ellipse{72}{72}}
\put(5325,1174){\ellipse{72}{72}}
\put(5325,1624){\blacken\ellipse{72}{72}}
\put(5325,1624){\ellipse{72}{72}}
\put(5325,724){\blacken\ellipse{72}{72}}
\put(5325,724){\ellipse{72}{72}}
\put(2550,1174){\blacken\ellipse{72}{72}}
\put(2550,1174){\ellipse{72}{72}}
\put(2550,1624){\blacken\ellipse{72}{72}}
\put(2550,1624){\ellipse{72}{72}}
\put(3000,2974){\blacken\ellipse{72}{72}}
\put(3000,2974){\ellipse{72}{72}}
\put(3000,2524){\blacken\ellipse{72}{72}}
\put(3000,2524){\ellipse{72}{72}}
\put(3000,2074){\blacken\ellipse{72}{72}}
\put(3000,2074){\ellipse{72}{72}}
\put(2550,2074){\blacken\ellipse{72}{72}}
\put(2550,2074){\ellipse{72}{72}}
\put(2550,2524){\blacken\ellipse{72}{72}}
\put(2550,2524){\ellipse{72}{72}}
\put(2550,2974){\blacken\ellipse{72}{72}}
\put(2550,2974){\ellipse{72}{72}}
\put(4875,2067){\blacken\ellipse{72}{72}}
\put(4875,2067){\ellipse{72}{72}}
\put(4875,2517){\blacken\ellipse{72}{72}}
\put(4875,2517){\ellipse{72}{72}}
\put(4875,2967){\blacken\ellipse{72}{72}}
\put(4875,2967){\ellipse{72}{72}}
\put(5325,2517){\blacken\ellipse{72}{72}}
\put(5325,2517){\ellipse{72}{72}}
\put(5325,2967){\blacken\ellipse{72}{72}}
\put(5325,2967){\ellipse{72}{72}}
\put(5325,2067){\blacken\ellipse{72}{72}}
\put(5325,2067){\ellipse{72}{72}}
\put(2550,724){\blacken\ellipse{72}{72}}
\put(2550,724){\ellipse{72}{72}}
\put(1650,724){\blacken\ellipse{72}{72}}
\put(1650,724){\ellipse{72}{72}}
\put(1650,1174){\blacken\ellipse{72}{72}}
\put(1650,1174){\ellipse{72}{72}}
\put(1650,1624){\blacken\ellipse{72}{72}}
\put(1650,1624){\ellipse{72}{72}}
\put(1650,2074){\blacken\ellipse{72}{72}}
\put(1650,2074){\ellipse{72}{72}}
\put(1650,2524){\blacken\ellipse{72}{72}}
\put(1650,2524){\ellipse{72}{72}}
\put(1650,2974){\blacken\ellipse{72}{72}}
\put(1650,2974){\ellipse{72}{72}}
\put(1200,1174){\blacken\ellipse{72}{72}}
\put(1200,1174){\ellipse{72}{72}}
\put(1200,1624){\blacken\ellipse{72}{72}}
\put(1200,1624){\ellipse{72}{72}}
\put(1200,724){\blacken\ellipse{72}{72}}
\put(1200,724){\ellipse{72}{72}}
\put(75,1849){\blacken\ellipse{72}{72}}
\put(75,1849){\ellipse{72}{72}}
\put(375,1849){\blacken\ellipse{72}{72}}
\put(375,1849){\ellipse{72}{72}}
\put(675,1849){\blacken\ellipse{72}{72}}
\put(675,1849){\ellipse{72}{72}}
\put(1200,2074){\blacken\ellipse{72}{72}}
\put(1200,2074){\ellipse{72}{72}}
\put(1200,2524){\blacken\ellipse{72}{72}}
\put(1200,2524){\ellipse{72}{72}}
\put(1200,2974){\blacken\ellipse{72}{72}}
\put(1200,2974){\ellipse{72}{72}}
\path(6150,1624)(6150,724)
\path(6150,1624)(6600,1624)
\path(6600,1624)(6600,724)
\path(6600,724)(6150,724)
\path(6150,1174)(6600,1174)
\path(6150,2967)(6150,2067)
\path(6150,2067)(6600,2067)
\path(6600,2067)(6600,2967)
\path(6600,2967)(6150,2967)
\path(6150,2517)(6600,2517)
\thicklines
\path(3000,1624)(3450,1624)
\path(3450,1624)(3900,1624)
\path(3000,1174)(3900,1174)
\path(3000,724)(3900,724)
\path(3450,1624)(3450,724)
\thinlines
\path(3900,724)(3900,574)(4125,574)
	(4125,1624)(4875,1624)
\path(3450,724)(3450,499)(4275,499)
	(4275,1174)(4875,1174)
\path(3450,1624)(4350,2524)(4875,2524)
\path(3000,1624)(4350,2974)(4875,2974)
\dashline{60.000}(5400,1624)(6150,1624)
\dashline{60.000}(5400,1174)(6150,1174)
\dashline{60.000}(5400,724)(6150,724)
\path(5325,1174)(5325,724)
\path(5325,1624)(5325,1174)
\path(4875,724)(5325,724)
\path(4875,1174)(4875,724)
\path(4875,1624)(4875,1174)
\path(4875,1174)(5325,1174)
\path(2550,1624)(3000,1624)
\thicklines
\path(3000,1624)(3000,724)
\thinlines
\path(3000,724)(2550,724)
\path(2550,1174)(3000,1174)
\dashline{60.000}(1725,1174)(2475,1174)
\dashline{60.000}(1725,1624)(2475,1624)
\path(2550,1624)(2550,724)
\path(4875,1624)(5325,1624)
\dashline{60.000}(1725,2074)(2475,2074)
\dashline{60.000}(1725,2524)(2475,2524)
\dashline{60.000}(1725,2974)(2475,2974)
\path(3900,1624)(4350,2074)(4875,2074)
\path(3000,2974)(3000,2074)
\path(3000,2974)(2550,2974)
\path(2550,2974)(2550,2074)
\path(2550,2074)(3000,2074)
\path(2550,2524)(3000,2524)
\thicklines
\path(3900,1624)(3900,724)
\thinlines
\dashline{60.000}(5400,2967)(6150,2974)
\dashline{60.000}(5400,2517)(6150,2524)
\dashline{60.000}(5400,2067)(6150,2074)
\path(5325,2517)(5325,2067)
\path(5325,2967)(5325,2517)
\path(4875,2067)(5325,2067)
\path(4875,2517)(4875,2067)
\path(4875,2967)(5325,2967)
\path(4875,2967)(4875,2517)
\path(4875,2517)(5325,2517)
\dashline{60.000}(1725,724)(2475,724)
\path(1200,2974)(1650,2974)
\path(1650,2974)(1650,2074)
\path(1650,2074)(1200,2074)
\path(1200,2524)(1650,2524)
\path(1200,1624)(1650,1624)
\path(1650,1624)(1650,724)
\path(1650,724)(1200,724)
\path(1650,1174)(1200,1174)
\path(75,1849)(675,1849)
\path(1200,1174)(1200,724)
\path(1200,1624)(1200,1174)
\path(1200,2074)(1200,2974)
\path(1200,1624)(900,1624)(675,1849)
	(900,2074)(1200,2074)
\path(1200,1174)(1050,1174)(375,1849)
	(1050,2524)(1200,2524)
\path(1200,724)(75,1849)(1200,2974)
\path(3000,724)(3000,424)(4425,424)
	(4425,724)(4875,724)
\put(300,1924){\makebox(0,0)[lb]{{\SetFigFont{12}{14.4}{rm}$r^1$}}}
\put(600,1924){\makebox(0,0)[lb]{{\SetFigFont{12}{14.4}{rm}$r^2$}}}
\put(0,1924){\makebox(0,0)[lb]{{\SetFigFont{12}{14.4}{rm}$r^0$}}}
\put(1245,2146){\makebox(0,0)[lb]{{\SetFigFont{12}{14.4}{rm}$u_1^2$}}}
\put(1245,2596){\makebox(0,0)[lb]{{\SetFigFont{12}{14.4}{rm}$u_1^1$}}}
\put(1245,3046){\makebox(0,0)[lb]{{\SetFigFont{12}{14.4}{rm}$u_1^0$}}}
\put(2775,2149){\makebox(0,0)[lb]{{\SetFigFont{12}{14.4}{rm}$u_i^2$}}}
\put(2775,2599){\makebox(0,0)[lb]{{\SetFigFont{12}{14.4}{rm}$u_i^1$}}}
\put(2775,3049){\makebox(0,0)[lb]{{\SetFigFont{12}{14.4}{rm}$u_i^0$}}}
\put(3375,49){\makebox(0,0)[lb]{{\SetFigFont{12}{14.4}{rm}$H$}}}
\end{picture}
}
}
%\caption{Construction of tree $T$}\label{F:Ord2TriplyPlGI}
\end{center}

Now it suffices again to mark the nodes corresponding to~$v_j$.
That is,
define graph~$H_1$ as graph~$H$ plus the  edge $(u_j^0,u_j^2)$,
and~$H_2$ as~$H$ plus the  edge $(w_j^0,w_j^2)$.
Note that~$H_1$ and~$H_2$ are planar and 3-connected.
Furthermore,
any isomorphism between~$H_1$ and~$H_2$ has to map  
$u_j^0$ to $w_j^0$,
$u_j^1$ to $w_j^1$, and
$u_j^2$ to $w_j^2$.
Again,
this is only possible iff $v_i < v_j$.
This completes the proof of Theorem~\ref{T:Planar3ConnGIHardforL}.

\vspace{2ex}
%######################################################################################

A final observation is about oriented trees.
An {\em oriented tree\/} is a tree with a planar rotation scheme.
It is not hard to see that one can adapt Lindell's algorithm to work for oriented trees,
so that the corresponding isomorphism problem is in~$\L$.
We show that it is also hard for~$\L$.

\begin{theorem}
\label{T:OrientedTreeIsoInL}
Oriented tree isomorphism is complete for $\L$.
\end{theorem}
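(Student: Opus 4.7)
The plan is to establish $\L$-hardness via an $\AC^0$-many-one reduction from $\Ord$, adapting the tree construction from the proof of Theorem~\ref{T:Planar3ConnGIHardforL}. Given an instance $(G, s = v_i, t = v_j)$ of $\Ord$ with $i \neq n$, build the same ``interrupted ladder'' tree $T$ on vertices $r, u_1, \dots, u_n, w_1, \dots, w_n$ and the two trees $T_1, T_2$ obtained by attaching marker leaves $x_0, x_1$ at $u_j$ and $y_0, y_1$ at $w_j$, respectively. As already shown there, $T_1 \cong T_2$ as unoriented trees iff the unique nontrivial automorphism $\sigma$ of $T$, which swaps $u_k$ with $w_k$ for every $k > i$, carries $u_j$ to $w_j$---that is, iff $i < j$.

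To promote this to a reduction into oriented tree isomorphism, I equip $T_1$ and $T_2$ with rotation schemes $\rho_1, \rho_2$ tailored so that $\sigma$ becomes orientation-preserving. Nontrivial cyclic orders occur only at vertices of degree $\geq 3$: the branching vertex $w_i$ of $T$ (degree $3$) and the marker vertex (degree $3$ or $4$). At $w_i$, with its three neighbors denoted $a$ (toward $r$), $b = w_{i+1}$, $c = u_{i+1}$, set $\rho_1 = (a, b, c)$ and $\rho_2 = (a, c, b)$, so that $\rho_2$ is precisely the image of $\rho_1$ under $\sigma$ (which fixes $a$ and swaps $b$ with $c$). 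At the marker vertex $u_j$ of $T_1$ fix some cyclic order, say $(u_{j-1}, u_{j+1}, x_0, x_1)$---with $u_{j+1}$ omitted if $j = n$---and at $w_j$ of $T_2$ take the corresponding order $(w_{j-1}, w_{j+1}, y_0, y_1)$, matching the intended bijection $u_k \mapsto w_k$ (for $k > i$) and $x_k \mapsto y_k$. All remaining vertices have degree at most $2$, so their rotations are forced.

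For correctness: if $i < j$, then $\sigma$ extended by $x_k \mapsto y_k$ is by construction an oriented isomorphism $(T_1, \rho_1) \to (T_2, \rho_2)$, because the rotation at every degree-$\geq 3$ vertex was set up to be preserved. If $i \geq j$, then $T_1$ and $T_2$ are not isomorphic even as unoriented trees, hence certainly not as oriented trees. The whole construction is plainly computable in $\AC^0$, yielding $\L$-hardness. Combined with the $\L$ upper bound obtained by adapting Lindell's algorithm so that, when comparing the children of matched vertices, it respects the cyclic order prescribed by the rotation scheme, this establishes $\L$-completeness of oriented tree isomorphism. The only delicate point is the opposite choice of cyclic order at $w_i$ in $T_1$ versus $T_2$; without this twist, the natural rotations would not be transported correctly by $\sigma$.
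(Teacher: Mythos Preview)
Your proof is correct and follows essentially the same approach as the paper: reduce from $\Ord$ via the trees $T_1,T_2$ of Theorem~\ref{T:Planar3ConnGIHardforL}, take opposite cyclic orders at $w_i$ in $T_1$ versus $T_2$, and choose compatible orders at the marker vertices so that the swap $\sigma$ becomes orientation-preserving. The only cosmetic difference is the particular cyclic order at the marker vertex---the paper uses the alternating pattern $(u_{j-1},x_0,u_{j+1},x_1)$ and appeals to the $x_0\leftrightarrow x_1$ symmetry, whereas your choice $(u_{j-1},u_{j+1},x_0,x_1)$ is carried directly to $(w_{j-1},w_{j+1},y_0,y_1)$ by $\sigma$; both work.
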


We reduce $\Ord$ to the oriented tree isomorphism problem.
Let~$G$ be the given line graph and
consider again the trees~$T_1$ and~$T_2$ from above constructed from~$G$
in the proof of Theorem~\ref{T:Planar3ConnGIHardforL}.
For nodes of degree~1 or~2 there is only one rotation scheme.
Therefore we only have to take care of the nodes of degree~3 and~4,
i.e.\ $w_i$, $w_j$, and~$u_j$.
\begin{itemize}
\item
The rotation scheme for~$w_i$ is easy to handle:
output the edges around~$w_i$ for~$T_1$ in an arbitrary order,
and choose the opposite order for~$w_i$ in~$T_2$.
This definition fits together with
the only possible isomorphism that should exchange~$u_{i+1}$ and~$w_{i+1}$.
\item
In the rotation scheme for~$w_j$ the order of edges to the neighbors can be chosen as
$w_{j-1},~y_0,~w_{j+1},~y_1$,
and around $u_j$ in order $u_{j-1},~x_0,~u_{j+1},~x_1$.
Because of the symmetry of the parts $(u_j,x_0)$ and  $(u_j,x_1)$ in~$T_1$
and of $(w_j,y_0)$ and  $(w_j,y_1)$ in~$T_2$
an isomorphism mapping~$w_j$ to~$u_j$ can be defined respecting the 
rotation schemes for these nodes.
\end{itemize}
Now the same argument as for Theorem~\ref{T:Planar3ConnGIHardforL}
shows that the oriented trees~$T_1$ and~$T_2$ are isomorphic iff $v_i < v_j$.
This proves the theorem.

%%%%%%%%%%%%%%%%%%%%%%%%%%%%%%%%%%%%%%%%%%%%%%%%%%%%%%%%%%%%%%%%%%%%%%%%%%%%%%

\vskip-0.3cm
\section*{Open Problems}

The most challenging task is to close the gap between $\L$ and $\UL \cap \co\UL$
for the planar 3-connected graph isomorphism problem.
Another goal is to extend the isomorphism test to arbitrary planar graphs.
If the graph is not connected,
we can compute the connected components and consider them separately.
Hence, we may assume that the graph is connected.
Then we can determine the articulation points and the separating pairs
and get the 1- and 2-connected components of the graph.
For sequential algorithms to compute a canonical form for these graphs
see for example~\cite{KukHolCoo04}.
Miller and Reif~\cite{MilRei91} provide an $\AC^1$-reduction from 
planar graphs to planar 3-connected graphs.
We ask whether one can compute a canonical form for planar graphs in (unambiguous) logspace.

\vskip-0.3cm
\subsection*{Acknowledgment} %***
We thank Jacobo Tor{\'a}n and the anonymous referees for helpful comments on the manuscript.

%       \bibliography{/home/theorie/thierauf/Aktuell/Bib/main}
%       \bibliography{/home/theorie/thierauf/Aktuell/PlanarIso/main}
%\bibliography{main}
\bibliographystyle{alpha}

\vskip-0.3cm

\end{document}